\newtheorem{theorem}{Theorem}[section]
\newtheorem{proposition}{Proposition}[section]
\newtheorem{remark}{Remark}[section]
\newtheorem{lemma}{Lemma}[section]
\def\BibTeX{{\rm B\kern-.05em{\sc i\kern-.025em b}\kern-.08em
		T\kern-.1667em\lower.7ex\hbox{E}\kern-.125emX}}
\begin{document}
	\title{Distributed Control of Descriptor Networks:\\A Convex Procedure for Augmented Sparsity}
	\author{Andrei Speril\u{a}$^\ddag$, Cristian Oar\u{a}$^\ddag$, Bogdan D. Ciubotaru$^\ddag$ and \c{S}erban Sab\u{a}u$^\sharp$
		\thanks{$\ddag$ Andrei Speril\u{a}, Cristian Oar\u{a} and Bogdan D. Ciubotaru  are with the Faculty of Automatic Control and Computers,
			``Politehnica" University of Bucharest, Sector 6, 060042 Romania \newline (e-mails: \{andrei.sperila, cristian.oara, bogdan.ciubotaru\}@upb.ro). }
		\thanks{$\sharp$ \c{S}erban Sab\u{a}u is with the Electrical and Computer Engineering Department, Stevens Institute of Technology, Hoboken, NJ 07030 USA \newline (e-mail:
			ssabau@stevens.edu).}
		\thanks{The first three authors were  supported by a grant of the Ministry of Research, Innovation and Digitization, CCCDI - UEFISCDI, project no. PN-III-P2-2.1-PED-2021-1626, within PNCDI III. The last author was supported by the NSF - CAREER award number 1653756.}\vspace{-5mm}
	}
	
	\maketitle
	\thispagestyle{empty}
	
	\begin{abstract}
		For networks of systems, with possibly improper transfer function matrices, we present a design framework which enables $\mathcal{H}_\infty$ control, while imposing sparsity constraints on the controller's coprime factors. We propose a convex and iterative optimization procedure with guaranteed convergence to obtain distributed controllers. By exploiting the robustness-oriented nature of our proposed approach, we provide the means to obtain sparse representations of our control laws that may not be directly supported by the network's nominal model.
	\end{abstract}
	
	\begin{IEEEkeywords}
		Distributed processes, descriptor systems, sparse $\mathcal{H}_\infty$ control, convex optimization.
	\end{IEEEkeywords}\vspace{-3mm}
	\section{Introduction}\label{sec:intro}\vspace{-2mm}
	
	\subsection{Motivation}
	
	When faced with a distributed control problem, one notices an acute lack of dedicated numerical tools, if compared with the classical, centralized design context. Several computational methods, such as those proposed in \cite{Lavaei,reg,Sznaier,trian}, aim to exploit specialized techniques, in order to mitigate the numerical complexities inherent to distributed control.
	
	Notably, previous efforts \cite{Alav} have sought to enforce sparsity constraints directly upon a Finite Impulse Response (FIR) approximation of the Youla parameter, under certain restrictive assumptions, such as Quadratic Invariance (QI) and strong stabilizability (see \cite{QI_orig}). However, the technique proposed in Section 5 of \cite{Alav} cannot cope with enforcing sparsity patterns upon non-sparse \emph{affine expressions} of the Youla parameter.
	
	These issues were tackled in \cite{SLA}, with the introduction of the framework dubbed \emph{System-Level Synthesis} (SLS). Yet the focus on discrete-time systems meant that other architectures, such as the \emph{Network Realization Function} (NRF) representations discussed in \cite{plutonizare,NRF}, have been overshadowed by the FIR approximation methods from the SLS framework.
	
	\vspace{-4mm}
	
	
	\subsection{Paper structure and contributions}
	
	
	In this paper, we propose tractable techniques and numerical procedures for the NRF-based framework formalized in \cite{NRF}, which offers distributed control laws in \emph{both} continuous- and discrete-time, without needing to communicate any internal states, \emph{i.e.}, plant or controller states (see Section IV of \cite{NRF} for a comparison with the SLS framework), thus promoting scalable control laws for large-scale networks. In Section \ref{sec:prelim}, we cover a set of preliminary notions, with our paper's problem statement forming Section \ref{subsec:prob_st}. Our contributions are structured via the subsequent sections and may be listed as follows:
	\begin{enumerate}
		\item In Section \ref{subsec:mm_appr}, we show how to impose sparsity in the NRF formalism and how it reduces to a model-matching problem, that is solved via reliable procedures \cite{VSOLVE,VNULL,VCOVER};
		
		\item In Section \ref{subsec:rob_stab}, we extend the robust stabilization approach from \cite{glover1989robust} to the distributed, NRF-based setting;
		
		 \item In Section \ref{sec:algo}, we show how to particularize the convex and iterative procedure (with guaranteed convergence) from \cite{BMI2LMI} to obtain robust NRF-based implementations;
		
		\item In Section \ref{sec:examp}, we consider a generalization of the network in \cite{plutonizare} and we also show\footnote{All the implementations being compared in this paper are available at the following link: \texttt{https://github.com/AndreiSperila/CONPRAS}} how to employ our robustness-oriented approach to retrieve the same sparse control architecture as in \cite{plutonizare} for a more general case.
	\end{enumerate}
	Finally, Section \ref{sec:outro} contains a series of concluding remarks.

	\vspace{-1mm}
	\section{Preliminaries}\label{sec:prelim}
	
	\subsection{Nomenclature and definitions}\label{subsec:basic}

	Let $\mathbb{C}$, $\mathbb{C}^-$, $j\mathbb{R}$ and $\mathbb{B}$ denote the complex plane, the open left-half plane, the imaginary axis and the set $\{0,1\}$, respectively. Let $\mathbb{M}^{p\times m}$ stand for the set of all $p\times m$ matrices having entries in a set denoted $\mathbb{M}$. We also denote by $P\succ 0$ the fact that $P\in\mathbb{R}^{q\times q}$ is positive definite and by $\overline{\sigma}(Z)$ the maximum singular value of $Z\in\mathbb{C}^{p\times m}$. For any $M\in\mathbb{M}^{p\times m}$, $M^{\top}$ is its transpose. Let $\text{Ker}(M)$ denote the null space of $M\in\mathbb{M}^{p\times m}$ and let $\|Z\|_*$ denote the sum of the singular values belonging to $Z\in\mathbb{C}^{p\times m}$, which is termed the \emph{nuclear norm}. The operator $\otimes$ denotes the Kronecker product between any two matrices. We define the vectorization of $M\in\mathbb{M}^{p\times m}$ as $\text{vec}(M):=v\in\mathbb{M}^{pm\times 1}$, where $v_{i+(j-1)p}=M_{ij}$, along with the diagonalization of $M$ by $\text{diag}(M):=V\in\mathbb{M}^{pm\times pm}$, where $V_{ii}=\left(\text{vec}(M)\right)_i$, for $i\in1:pm$, and $V_{ij}=0$, $\forall i\neq j$. 
	
	For $M_i\in\mathbb{M}^{p_i\times m_i}$, with $i\in1:\ell$, and a natural number $g$, we define the block-diagonal concatenation operator by
	$
	\mathcal{D}(M_1,\dots,M_\ell)\hspace{-1mm}:=\hspace{-1mm}\footnotesize\begin{bmatrix}
	M_1&&\vspace{-2mm}\\
	&\ddots&\vspace{-2mm}\\
	&&M_\ell
	\end{bmatrix}\in\mathbb{M}^{\Sigma_{i=1}^{\ell}p_i\times\Sigma_{j=1}^{\ell}m_j}$ and the block-diagonal repetition of $M_i$ by $\mathcal{D}_g(M_i)\hspace{-1mm}:=\hspace{-1mm}\footnotesize\begin{bmatrix}
	M_i&&\vspace{-2mm}\\
	&\ddots&\vspace{-2mm}\\
	&&M_i
	\end{bmatrix}\in\mathbb{M}^{gp_i\times gm_i}
	$. For any $R\in\mathbb{M}^{q\times q}$, we denote its symmetric part by $\text{sym}(R):=\frac{1}{2}\big(R+R^{\top}\big)=\text{sym}\left(R^\top\right)$ and its diagonal part by
	$
	R^{\text{diag}}_{ij}:=\hspace{-1mm}\left\{\small\begin{aligned}
	R_{ij},\ i=j\\
	0,\ i\neq j\\
	\end{aligned}\right.,\ \forall i,j\in{1:q}
	$. 
	
	The matrix polynomial $A - s E$ is called a pencil, with square ones that have $\det(A - s E) \not \equiv 0$ being termed \emph{regular}. A regular pencil without finite generalized eigenvalues in $\mathbb{C}\backslash\mathbb{C}^-$ and without infinite generalized eigenvalues with partial multiplicities greater than $1$ (see \cite{gantmacher}) is called \emph{admissible}. Let $\Lambda(A-sE)$ be the collection of generalized eigenvalues (both finite and infinite) belonging to the regular pencil $A-sE$.

	In this paper, we will focus on systems described in frequency domain by {Transfer Function Matrices} (TFMs) of type
	$
	\left(\mathbf G (s)\right)_{ij}=\frac{a_{ij}(s)}{b_{ij}(s)},
	$
	with $a_{ij}(s)$ and $b_{ij}(s)$ polynomials with coefficients in $\mathbb{R}$, $i\in{1:p},\ j\in{1:m}$. We denote the set of all such TFMs with $m$ inputs and $p$ outputs by $\mathcal{R}^{p\times m}$, with $\mathcal{R}_{p}^{p\times m}$ being the subset of proper TFMs ($\deg a_{ij}\leq\deg b_{ij},\ \forall i\in1:p,\ j\in1:m$). Let $\mathcal{B}\in\mathbb{B}^{p\times m}$, which we use to express $\mathcal{S}_\mathcal{B}:=\{\mathbf{G}\in\mathcal{R}^{p\times m}\vert\mathcal{B}_{ij}=0\Rightarrow \mathbf{G}_{ij}\equiv0,\ \forall i\in{1:p},\ \forall j\in{1:m} \}.
	$ Note that
	$
	\mathbf{G}\in\mathcal{S}_{\mathcal{B}}\iff(I-\text{diag}(\mathcal{B}))\text{vec}(\mathbf{G})\equiv0.
	$
	We also define the restriction of $\mathcal{S}_\mathcal{B}$ to proper TFMs  $\widehat{\mathcal{S}}_\mathcal{B}:=\mathcal{S}_\mathcal{B}\cap\mathcal{R}_{p}^{p\times m}$. 
	
	A TFM without poles (see section 6.5.3 of \cite{Kai}) located in $\{\mathbb{C}\backslash\mathbb{C}^-\}\cup\{\infty\}$ is called \emph{stable}. Let $\mathcal{RH}_\infty$ denote the set of real-rational and stable TFMs, with the $\mathcal{H}_\infty$ norm of any $\mathbf G\in\mathcal{RH}_\infty$ being given by $\|\mathbf G\|_\infty := \sup_{s\in j{\mathbb{R}}} \overline{\sigma}\big(\mathbf G(s)\big)$. 
	
	The systems considered in this paper are usually represented in the time domain by differential and algebraic equations\vspace{-1mm}
	\begin{subequations}
		\begin{eqnarray}
		E\tfrac{\text{d}}{\text{d} t} x(t)&=&Ax(t)+Bu(t), \label{eq:l1} \\
		y(t) &=& Cx(t) + Du(t) \label{eq:l2}.
		\end{eqnarray}
	\end{subequations}
	The dimension of the regular pencil $A-sE$ and that of $x$, the vector which contains the realization's \emph{descriptor variables}, is called the {order} of the realization \eqref{eq:l1}-\eqref{eq:l2}. If its order is the smallest out of all others of its kind, a realization is called minimal (see section 2.4 of \cite{COAV}). Moreover, we have that\vspace{-1mm}
	\begin{equation}\label{eq:tfm}
	\mathbf G(s)=C(s E-A)^{-1}B + D=: \left[ \footnotesize\begin{array}{c|c} A - s E & B \\ \hline C & D \end{array} \right].\vspace{-1mm}
	\end{equation}
	
	Let the matrix $S_\infty$ span $\text{Ker}E$. A pair $(A-sE,B)$ or a realization \eqref{eq:tfm} for which $\left[ A - sE \, B \right]$ has full row rank $\forall s\in{\mathbb{C}}\backslash\mathbb{C}^-$ and $\left[ E \, AS_\infty \, B \right]$ has full row rank is called strongly stabilizable. By Theorem 1.1 in \cite{VPLACE}, strong stabilizability is equivalent to the existence of a matrix $F$, called an admissible feedback, such that the pencil $A+BF-sE$ is admissible. By duality, a pair $(C,A-sE)$ or realization \eqref{eq:tfm} is deemed strongly detectable if $(A^{\top}-sE^{\top},C^{\top})$ is strongly stabilizable.
	
	Let both $E_r$ and $D_r^\top D_r$ be invertible and consider\vspace{-1mm}
	\begin{equation}\label{eq:GCARE}
		\small\begin{array}{l}
			E_r^{\top}X_rA_r+A_r^{\top}X_rE_r+C_r^{\top}C_r-(E_r^{\top}X_rB_r+C_r^{\top}D_r)\times\\
			\hfill\times(D_r^{\top}D_r)^{-1}(B_r^{\top}X_rE_r+D_r^{\top}C_r)=0,
		\end{array}\vspace{-1mm}\hspace{-2mm}\normalsize
	\end{equation}
	the generalized continuous-time algebraic Riccati equation (GCARE, see \cite{RiccBig}). A symmetric solution $X_r$ of the GCARE is called stabilizing if  
	$
	F_r:=-(D_r^{\top}D_r)^{-1}(B_r^{\top}X_rE_r+D_r^{\top}C_r)
	$ is a stabilizing feedback, \emph{i.e.}, $\Lambda(A_r+B_rF_r-sE_r)\subset\mathbb{C}^-$.

	\vspace{-3mm}
	
	\subsection{Parametrization of all stabilizing controllers}\label{subsec:stab_fac}
	
	To obtain a tractable parametrization for NRF-based control laws, we employ the class of \emph{all} controllers which stabilize a network whose TFM $\mathbf{G}^n\in\mathcal{R}^{(p_u+p)\times (m_u+m)}$ is given by \vspace{-1mm}
	\begin{equation}\label{eq:part}
	\mathbf{G}^n=\left[\begin{array}{c:c}
	\mathbf{G}_{11}^n&\mathbf{G}_{12}^n\\\hdashline
	\mathbf{G}_{21}^n&\mathbf{G}_{22}^n
	\end{array}\right]=\left[\footnotesize\begin{array}{c|c:c}
	A-sE&B_1&B_2\\\hline
	C_1&D_{11}&D_{12}\\\hdashline
	C_2&D_{21}&D_{22}
	\end{array}\right],\normalsize\vspace{-1mm}
	\end{equation}
	where $A\in\mathbb{R}^{n\times n}$, $D_{11}\in\mathbb{R}^{p_u\times m_u}$, $D_{22}\in\mathbb{R}^{p\times m}$ and all other constant matrices have appropriate dimensions. 
	
	Under certain assumptions of strong stabilizability and detectability, the aforementioned class coincides with that of the controllers which render the closed-loop configuration from Fig. \ref{fig:cl} well-posed, \emph{i.e.}, $\text{det}(I-\mathbf{G}_{22}^n\mathbf{K})\not\equiv0$, and internally stable, \emph{i.e.}, all TFMs from $w_1$ and $w_2$ to $u_1$, $u_2$, $y_1$ and $y_2$ are stable. 
	We now state an extension of the Youla Parametrization, for a class of systems having possibly improper TFMs, by combining the notions from Sections 4.1 and 4.2 of \cite{takaba}.
	
	\begin{figure}[t]
		\centering
		\begin{tikzpicture}[scale=0.8, every node/.style={transform shape}]
			\node(n1)[circle,draw,minimum height=8]at(0,0){};
			\node(n2)[rounded corners=3, minimum height=1cm, minimum width=2cm,draw, thick, right of=n1, node distance=8em]{$ {\mathbf G}_{22}^n $};
			\node(n3)[rounded corners=3, minimum height=1cm, minimum width=2cm,draw, thick, below of=n2, node distance=4em]{${\mathbf K}$};
			\node(n4)[circle,draw,minimum height=8,right of=n3, node distance=8em]{$ $};
			\draw[-latex,line width=1.5pt](n1)++(-4em,0)--(n1) node[near start, above]{$w_1
				$} node[near end, below]{$+$};
			\draw[-latex,line width=1.5pt](n1)--(n2)node[ midway, above]{$u_1
				$};
			\draw[-latex,line width=1.5pt](n2)-|(n4) node[very near end, right]{$+$}node[ near start, above]{$y_1$};
			\draw[-latex,line width=1.5pt](n4)++(4em,0)--(n4) node[near start, above]{$w_2
				$} node[near end, below]{$+$};
			\draw[-latex,line width=1.5pt](n4)--(n3)node[ midway, above]{$\ u_2
				$};
			\draw[-latex,line width=1.5pt](n3)-|(n1) node[very near end, right]{$+$} node[ near start, above]{$y_2$};
		\end{tikzpicture}
		
		\caption{Closed-loop configuration}
		\label{fig:cl}\vspace{-5mm}
	\end{figure}
	
	\begin{theorem}\label{thm:Youla}
		Let $\mathbf{G}^n\in\mathcal{R}^{(p_u+p)\times (m_u+m)}$ be given as in \eqref{eq:part}, with $(A-sE,B_2)$ strongly stabilizable and $(C_2,A-sE)$ strongly detectable. Let $(\mathbf{N},\widetilde{\mathbf{N}},\mathbf{M},\widetilde{\mathbf{M}},\mathbf{X},\widetilde{\mathbf{X}},\mathbf{Y},\widetilde{\mathbf{Y}})$ be a doubly coprime factorization (DCF) of $\mathbf{G}_{22}^n=\mathbf{N}\mathbf{M}^{-1}=\widetilde{\mathbf{M}}^{-1}\widetilde{\mathbf{N}}$ over $\mathcal{RH}_{\infty}$, with all $8$ TFMs being stable and satisfying 
		\begin{equation}\label{eq:bez}
		\left[\small\begin{array}{rr}
		\widetilde{\mathbf Y}&-\widetilde{\mathbf X}\\
		-\widetilde{\mathbf N}&\widetilde{\mathbf M}
		\end{array}\right]\left[\small\begin{array}{cc}
		\mathbf M&\mathbf X\\
		\mathbf N&\mathbf Y
		\end{array}\right]=\left[\small\begin{array}{cc}
		I&0\\
		0&I
		\end{array}\right].\normalsize
		\end{equation} 
		Then, we have that:
		\begin{enumerate}
			\item[$\mathbf{(a)}$] A DCF over $\mathcal{RH}_\infty$ can be obtained, via \eqref{eq:part}, by
			\begin{subequations}
				\begin{align}
				\hspace{-2mm}\left[\small\begin{array}{c:c}
				\widetilde{\mathbf Y}&-\widetilde{\mathbf X}\\\hdashline
				-\widetilde{\mathbf N}&\widetilde{\mathbf M}
				\end{array}\right]\hspace{-1mm}:=&\hspace{-1mm}\left[\footnotesize\begin{array}{c|c:c}
				A_H-sE&-B_2-HD_{22}&H\\\hline
				F&I&0\\\hdashline
				C_2&-D_{22}&I
				\end{array}\right]\hspace{-1mm},\label{eq:dublu1}\\
				\left[\small\begin{array}{c:c}
				\mathbf M&\mathbf X\\\hdashline
				\mathbf N&\mathbf Y
				\end{array}\right]\hspace{-1mm}:=&\hspace{-1mm}\left[\footnotesize\begin{array}{c|c:c}
				A_F-sE&B_2&-H\\\hline
				F&I&0\\\hdashline
				C_2+D_{22}F&D_{22}&I
				\end{array}\right]\normalsize\hspace{-1mm},\label{eq:dublu2}
				\end{align}
			\end{subequations}
			with both of the pencils $
			A_H-sE:=A+HC_2-sE$ and $
			A_F-sE:=A+B_2F-sE$
			being admissible;
			
			\item[$\mathbf{(b)}$] The class of all stabilizing controllers is given by
			\begin{equation}\label{eq:K_frac}\small
			\mathbf K=(\mathbf X+\mathbf M\mathbf Q)(\mathbf Y+\mathbf N\mathbf Q)^{-1}\hspace{-1mm}=(\widetilde{\mathbf {Y}}+\mathbf Q\widetilde{\mathbf {N}})^{-1}(\widetilde{\mathbf {X}}+\mathbf Q\widetilde{\mathbf {M}}),\normalsize
			\end{equation}
			for all $\mathbf{Q}\in\mathcal{RH}_\infty^{m\times p}$ which ensure that $\det(\mathbf Y+\mathbf N\mathbf Q)\not\equiv0$ and $\det\big(\widetilde{\mathbf {Y}}+\mathbf Q\widetilde{\mathbf {N}}\big)\not\equiv0$;
			
			\item[$\mathbf{(c)}$] For a stabilizing $\mathbf K$ given by a DCF over $\mathcal{RH}_\infty$ of $\mathbf{G}_{22}^n$,
			\begin{equation*}\label{eq:H_CL}
			\mathbf{G}_{CL}=\mathcal{F}_\ell(\mathbf{G}^n\hspace{-1mm},\mathbf{K}):=\mathbf{G}_{11}^n+\mathbf{G}_{12}^n\mathbf{K}(I-\mathbf{G}_{22}^n\mathbf{K})^{-1}\mathbf{G}_{21}^n
			\end{equation*}
			is expressed affinely in terms of $\mathbf{Q}$ from \eqref{eq:K_frac} by the identity $\mathbf{G}_{CL}=\mathbf T_1+\mathbf T_2\mathbf
			Q\mathbf T_3$. Given a realization of the employed DCF, as in \eqref{eq:dublu1}-\eqref{eq:dublu2}, we have that
			\begin{subequations}
				\begin{align}
				\mathbf T_{1}:=&\ \mathbf{G}_{11}^n+\mathbf{G}_{12}^n\mathbf{X}\widetilde{\mathbf{M}}\mathbf{G}_{21}^n\nonumber\\
				=&\footnotesize\left[\begin{array}{cc|c}
				A_F-sE&-B_2F&B_1\\
				0&\hspace{-1mm}A_H-sE&B_1+HD_{21}\\\hline
				C_1+D_{12}F&-D_{12}F&D_{11}
				\end{array}\right],\label{eq:afin2}\\
				\mathbf T_{2}:=&\ \mathbf{G}_{12}^n\mathbf{M}=\footnotesize\left[\begin{array}{c|c}
				A_F-sE&B_2\\\hline C_1+D_{12}F&D_{12}
				\end{array}\right],\label{eq:afin3}\\
				\mathbf T_{3}:=&\ \widetilde{\mathbf{M}}\mathbf{G}_{21}^n=\footnotesize\left[\begin{array}{c|c}
				A_H-sE&B_1+HD_{21}\\\hline
				C_2&D_{21}
				\end{array}\right].\label{eq:afin4}
				\end{align}
			\end{subequations}
		\end{enumerate}
	\end{theorem}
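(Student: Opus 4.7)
The plan is to address the three claims in the order stated, using strong stabilizability and detectability only in part (a), and then exploiting the doubly coprime factorization obtained there to drive the rest. The state-space formulas in \eqref{eq:dublu1}--\eqref{eq:dublu2} will supply both the DCF factors and a convenient scratchpad on which the Bezout identity can be verified via descriptor manipulations.

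For part (a), my first step is to invoke Theorem 1.1 of \cite{VPLACE} to obtain admissible feedbacks $F$ and $H$ that render the pencils $A_F-sE=A+B_2F-sE$ and $A_H-sE=A+HC_2-sE$ admissible, which immediately places each of the eight TFMs built from \eqref{eq:dublu1}--\eqref{eq:dublu2} in $\mathcal{RH}_\infty$. The remaining content of (a) is \eqref{eq:bez}: I would verify it at the descriptor level by composing the two realizations via the standard product rule for \eqref{eq:tfm} and then applying a state-similarity of block-triangular type $T=\mathcal{D}(I,I)+W$ (with $W$ a suitable off-diagonal coupling) that upper-triangularizes the combined pencil. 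The cross-coupling terms generated by $HC_2$ and $B_2F$ cancel against the feedthrough contributions of $\widetilde{\mathbf{X}}$, $\mathbf{X}$, $\widetilde{\mathbf{N}}$, $\mathbf{N}$, leaving the $2\times 2$ block identity. The factorizations $\mathbf{G}_{22}^n=\mathbf{N}\mathbf{M}^{-1}=\widetilde{\mathbf{M}}^{-1}\widetilde{\mathbf{N}}$ then follow by extracting corresponding rows and columns of \eqref{eq:bez} and cancelling admissible modes that become uncontrollable/unobservable in the respective products.

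For part (b), the argument follows the descriptor Youla extension underlying Section 4.2 of \cite{takaba}: given any stabilizing $\mathbf{K}$, the TFM $\mathbf{Q}:=\widetilde{\mathbf{M}}\mathbf{K}(I-\mathbf{G}_{22}^n\mathbf{K})^{-1}\mathbf{M}-\widetilde{\mathbf{X}}\mathbf{M}$ lies in $\mathcal{RH}_\infty^{m\times p}$ by internal stability, and inverting \eqref{eq:K_frac} through \eqref{eq:bez} recovers $\mathbf{K}$ in the claimed parametrized form; conversely, every $\mathbf{Q}\in\mathcal{RH}_\infty^{m\times p}$ satisfying the two determinantal non-singularity conditions produces a well-posed closed loop whose four standard transfers can be written as polynomial combinations of the DCF factors and $\mathbf{Q}$, hence belong to $\mathcal{RH}_\infty$. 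Part (c) is then a direct computation: applying \eqref{eq:bez} in the form $\widetilde{\mathbf{M}}(\mathbf{Y}+\mathbf{N}\mathbf{Q})-\widetilde{\mathbf{N}}(\mathbf{X}+\mathbf{M}\mathbf{Q})=I$, one obtains $\mathbf{K}(I-\mathbf{G}_{22}^n\mathbf{K})^{-1}=(\mathbf{X}+\mathbf{M}\mathbf{Q})\widetilde{\mathbf{M}}$, so that $\mathbf{G}_{CL}=\mathbf{G}_{11}^n+\mathbf{G}_{12}^n\mathbf{X}\widetilde{\mathbf{M}}\mathbf{G}_{21}^n+\mathbf{G}_{12}^n\mathbf{M}\mathbf{Q}\widetilde{\mathbf{M}}\mathbf{G}_{21}^n$, which is exactly $\mathbf{T}_1+\mathbf{T}_2\mathbf{Q}\mathbf{T}_3$. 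The realizations \eqref{eq:afin3}--\eqref{eq:afin4} follow by collapsing the series compositions and absorbing $F$, $H$ into the $A$-dynamics; \eqref{eq:afin2} needs one further state-similarity that subtracts the duplicated copy of the $A$-block arising from the joint presence of $\mathbf{X}\widetilde{\mathbf{M}}$ and $\mathbf{G}^n$ in the product, producing the zero in the $(2,1)$ descriptor block and the $-B_2F$ in the $(1,2)$ one.

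The hard part will be the descriptor-level verification of \eqref{eq:bez} in (a): one must keep the pencil $E$ present throughout the block manipulations, pick similarity transformations that expose the cancellations which are automatic in the proper-TFM case, and ensure that every intermediate pencil remains admissible so that the realizations continue to correspond to $\mathcal{RH}_\infty$ TFMs. Once that identity is established, parts (b) and (c) reduce to algebra in $\mathcal{RH}_\infty$ combined with standard state-space composition rules, and no further descriptor-specific obstructions are expected.
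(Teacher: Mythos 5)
The paper itself offers no proof of this theorem: it is stated as an import, ``combining the notions from Sections 4.1 and 4.2 of \cite{takaba}'', with only Remark \ref{rem:dss2ss} addressing the existence of the admissible feedbacks $F$ and $H$ via \cite{VPLACE}. So there is no in-paper argument to match your proposal against; what you have written is a reconstruction of the standard descriptor Youla proof, and its overall architecture --- admissibility of $A_F-sE$ and $A_H-sE$ forcing all eight factors into $\mathcal{RH}_\infty$, a realization-level verification of \eqref{eq:bez} by block-triangularizing similarity, completeness of the parametrization, and the affine closed-loop map via $\mathbf{K}(I-\mathbf{G}_{22}^n\mathbf{K})^{-1}=(\mathbf{X}+\mathbf{M}\mathbf{Q})\widetilde{\mathbf{M}}$ --- is exactly the route the cited reference takes. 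Your part (c) computation and the state-elimination arguments for \eqref{eq:afin2}--\eqref{eq:afin4} are correct, and you rightly identify that the descriptor-specific content is keeping every intermediate pencil admissible (no impulsive modes), not just Hurwitz.

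There is one concrete error: the $\mathbf{Q}$-recovery formula in part (b). The expression $\widetilde{\mathbf{M}}\mathbf{K}(I-\mathbf{G}_{22}^n\mathbf{K})^{-1}\mathbf{M}-\widetilde{\mathbf{X}}\mathbf{M}$ does not return the Youla parameter: substituting $\mathbf{K}=\mathbf{X}\mathbf{Y}^{-1}$ (i.e., $\mathbf{Q}=0$), for which $\mathbf{K}(I-\mathbf{G}_{22}^n\mathbf{K})^{-1}=\mathbf{X}\widetilde{\mathbf{M}}$, your formula yields $\widetilde{\mathbf{M}}\mathbf{X}\widetilde{\mathbf{M}}\mathbf{M}-\widetilde{\mathbf{X}}\mathbf{M}$, which is not identically zero since $\widetilde{\mathbf{M}}\mathbf{M}\neq I$ in general. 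The standard repair is to factor the given stabilizing controller as a stable RCF $\mathbf{K}=\mathbf{U}\mathbf{V}^{-1}$, use internal stability to show that $\widetilde{\mathbf{M}}\mathbf{V}-\widetilde{\mathbf{N}}\mathbf{U}$ is a unit of $\mathcal{RH}_\infty$, absorb it into $\mathbf{V}$ and $\mathbf{U}$ so that $\widetilde{\mathbf{M}}\mathbf{V}-\widetilde{\mathbf{N}}\mathbf{U}=I$, and then set $\mathbf{Q}:=\widetilde{\mathbf{Y}}\mathbf{U}-\widetilde{\mathbf{X}}\mathbf{V}$, which is stable by construction; the identities $\mathbf{M}\widetilde{\mathbf{Y}}-\mathbf{X}\widetilde{\mathbf{N}}=I$ and $\mathbf{M}\widetilde{\mathbf{X}}=\mathbf{X}\widetilde{\mathbf{M}}$ (from the two-sided version of \eqref{eq:bez}) then give $\mathbf{X}+\mathbf{M}\mathbf{Q}=\mathbf{U}$ and $\mathbf{Y}+\mathbf{N}\mathbf{Q}=\mathbf{V}$, completing the converse direction. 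With that substitution your outline is sound.
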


	\begin{remark}\label{rem:dss2ss}
		The two admissible feedbacks $F$ and $H$ can always be chosen via the two step stabilization algorithm from \cite{VPLACE}. Since $A_H-sE$ and $A_F-sE$ are admissible, the TFMs from \eqref{eq:dublu1}-\eqref{eq:dublu2} and \eqref{eq:afin2}-\eqref{eq:afin4} are all stable, and thus proper. State-space realizations for these TFMs can be obtained via the residualization procedure mentioned in Section 3 of \cite{VNCF}.
	\end{remark}\vspace{-4mm}
	
	
	\section{Theoretical results}\label{sec:arch}
	
	
	\subsection{Problem statement}\label{subsec:prob_st}
	
	The results presented in this paper tackle the problem of obtaining \emph{sparse} and \emph{robustly stabilizing} control laws of type
	\begin{equation}\label{eq:NRF_explicit}
		\mathbf{u}_i=\textstyle\sum_{j=1}^{m}\mathbf{\Phi}_{ij}\mathbf{u}_j+\textstyle\sum_{k=1}^{p}\mathbf{\Gamma}_{ik}\mathbf{y}_k,\ \mathbf{\Phi}_{ii}\equiv0,\ \forall i\in1:m,
	\end{equation}
	discussed in \cite{NRF}. More specifically, we aim to impose $\mathbf{\Gamma}\in\widehat{\mathcal{S}}_\mathcal{X}$ and $\mathbf{\Phi}\in\widehat{\mathcal{S}}_\mathcal{Y}$, for some binary matrices $\mathcal{X}\in\mathbb{B}^{m\times p}$ and $\mathcal{Y}\in\mathbb{B}^{m\times m}$, with ${\mathcal{Y}}^{\text{diag}}=0$, and to have the control laws from \eqref{eq:NRF_explicit} stabilize all network models $\mathbf{G}_{\mathbf{\Delta}}\in\mathcal{C}_{\mathbf{G}}^\epsilon$, where the class $\mathcal{C}_{\mathbf{G}}^\epsilon$ is of the type discussed in \cite{gap}, owing to its generality.
	
	In the sequel, we show that this problem reduces to\vspace{-1mm}
	\begin{equation}\label{eq:prob}
		\left\|\widehat{\mathbf{T}}_1+\textstyle\sum_{i=1}^{q}\mathbf{x}_i\widehat{\mathbf{T}}_{2i}\right\|_\infty< 1,\ \mathbf{x}_i\in\mathcal{RH}_{\infty}^{1\times 1},\ i\in 1:q,\vspace{-1mm}
	\end{equation}
	with $\widehat{\mathbf{T}}_1,\widehat{\mathbf{T}}_{2i}\in\mathcal{RH}_\infty$, $\forall\ i\in 1:q$, being expressed in terms of \eqref{eq:tfm} and \eqref{eq:dublu1}-\eqref{eq:dublu2}. Finally, we particularize convex relaxation-based procedures \cite{BMI2LMI} available in literature to solve \eqref{eq:prob} and we compose $(\mathbf{\Phi},\mathbf{\Gamma})$ from the obtained $\mathbf{x}_i\in\mathcal{RH}_\infty^{1\times 1}$, $i\in 1:q$.
	

	\subsection{Parametrization of NRF-based control laws}\label{subsec:mm_appr}
	
	We show here how the problem of obtaining the \emph{sparse} and \emph{stabilizing} distributed control laws of type \eqref{eq:NRF_explicit} can be reduced to a readily solvable model-matching problem. As discussed in Section III of \cite{NRF}, this is primarily done by factorizing a stabilizing controller from the class expressed in Theorem \ref{thm:Youla} as
	$
	\mathbf{K}=(I-\mathbf{\Phi})^{-1}\mathbf{\Gamma}\label{eq:NRF1},
	$
	where $(\widetilde{\mathbf {Y}}+\mathbf Q\widetilde{\mathbf {N}})^{\text{diag}}$ and $(\widetilde{\mathbf {Y}}+\mathbf Q\widetilde{\mathbf {N}})$ have proper inverses and the NRF pair $(\mathbf{\Phi},\mathbf{\Gamma})$ is obtained as
	\begin{subequations}
		\begin{align}
			\mathbf{\Phi}:=&\ I-((\widetilde{\mathbf {Y}}+\mathbf Q\widetilde{\mathbf {N}})^{\text{diag}})^{-1}(\widetilde{\mathbf {Y}}+\mathbf Q\widetilde{\mathbf {N}})\in\mathcal{R}_{p}^{m\times m},\label{eq:NRF_def_1}\\
			\mathbf{\Gamma}:=&\ ((\widetilde{\mathbf {Y}}+\mathbf Q\widetilde{\mathbf {N}})^{\text{diag}})^{-1}(\widetilde{\mathbf {X}}+\mathbf Q\widetilde{\mathbf {M}})\in\mathcal{R}_{p}^{m\times p}.\label{eq:NRF_def_2}
		\end{align}
	\end{subequations}
	
	\begin{remark}\label{rem:implem}
		When the realization of $\mathbf{G}_{22}^n\in\mathcal{R}^{p\times m}$ (not necessarily proper) from \eqref{eq:part} is strongly stabilizable and detectable, the guarantees of closed-loop internal stability and of scalability showcased in Section III of \cite{NRF} for control laws of type \eqref{eq:NRF_explicit} will also hold. Thus, since all closed-loop transfers are stable and since the analogues of Lemmas 5.2 and 5.3 in \cite{zhou} (formulated for descriptor systems) are in effect, then the descriptor variables of both the plant and of the controller's NRF-based implementation (along with their output signals in closed-loop interconnection) will be bounded and will tend to $0$, when evolving freely from any finite initial conditions.
	\end{remark}
	
	With the stability guarantees of \eqref{eq:NRF_explicit} clarified in Remark \ref{rem:implem}, we now focus on imposing sparsity patterns on the $(\mathbf{\Phi},\mathbf{\Gamma})$ pair. 
	 The following result offers a characterization of the stable Youla parameters which, for a given DCF over $\mathcal{RH}_\infty$, produce the desired sparsity structure for the NRF pair in \eqref{eq:NRF_def_1}-\eqref{eq:NRF_def_2}.
	
	\begin{proposition}\label{prop:NRF}
		Let $\mathbf{G}\in\mathcal{R}^{p\times m}$ be given by a DCF over $\mathcal{RH}_\infty$ \eqref{eq:dublu1}-\eqref{eq:dublu2}, let $\mathcal{X}\in\mathbb{B}^{m\times p}$ and let $\widehat{\mathcal{Y}}\in\mathbb{B}^{m\times m}$, with $\widehat{\mathcal{Y}}^{\text{diag}}=I$. Define $F_{\mathcal{X}}:=I-\text{diag}(\mathcal{X})$ and $F_{\widehat{\mathcal{Y}}}:=I-\text{diag}(\widehat{\mathcal{Y}})$. If there exist $\mathbf{Q}_0\in\mathcal{RH}_\infty^{m\times p}$ and $\widehat{\mathbf{Q}}\in\mathcal{RH}_\infty^{m\times p}$ satisfying
		\begin{subequations}
			\begin{equation}\label{eq:NRF_EMM}
				\small\begin{bmatrix}
					F_{\mathcal{X}}(\widetilde{\mathbf{M}}^\top\otimes I)\\F_{\widehat{\mathcal{Y}}}(\widetilde{\mathbf{N}}^\top\otimes I)
				\end{bmatrix}\normalsize\text{vec}(\mathbf{Q}_0)+\small\begin{bmatrix}
					F_{\mathcal{X}}\text{vec}(\widetilde{\mathbf{X}})\\F_{\widehat{\mathcal{Y}}}\text{vec}(\widetilde{\mathbf{Y}})
				\end{bmatrix}\normalsize\equiv0,\vspace{-1mm}
			\end{equation}
			\begin{equation}\label{eq:null_NRF}
				\text{vec}(\widehat{\mathbf{Q}})\in\text{Ker}\small\begin{bmatrix}
					F_{\mathcal{X}}(\widetilde{\mathbf{M}}^\top\otimes I)\\F_{\widehat{\mathcal{Y}}}(\widetilde{\mathbf{N}}^\top\otimes I)
				\end{bmatrix}\normalsize,\vspace{-1mm}
			\end{equation}
			\begin{equation}\label{eq:mid_prop}
				\det\big((\widetilde{\mathbf {Y}}+(\mathbf{Q}_0+\widehat{\mathbf{Q}})\widetilde{\mathbf {N}})(\infty)\big)\neq0,\vspace{-2mm}
			\end{equation}
			\begin{equation}\label{eq:fin_prop}
				\det\big((\widetilde{\mathbf {Y}}+(\mathbf{Q}_0+\widehat{\mathbf{Q}})\widetilde{\mathbf {N}})^{\text{diag}}(\infty)\big)\neq0,\vspace{-2mm}
			\end{equation}
		\end{subequations}
		then the controller in \eqref{eq:K_frac}, formed via the employed DCF over $\mathcal{RH}_\infty$ of type \eqref{eq:dublu1}-\eqref{eq:dublu2} and via $\mathbf{Q}:=\mathbf{Q}_0+\widehat{\mathbf{Q}}$, admits an NRF implementation of \eqref{eq:NRF_def_1}-\eqref{eq:NRF_def_2} with $\mathbf{\Gamma}\in\widehat{\mathcal{S}}_\mathcal{X}$ and $\mathbf{\Phi}\in\widehat{\mathcal{S}}_{(\widehat{\mathcal{Y}}-I)}$.\vspace{-1mm}
	\end{proposition}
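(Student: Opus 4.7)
The plan is to translate the two sparsity requirements $\mathbf{\Gamma}\in\widehat{\mathcal S}_{\mathcal X}$ and $\mathbf{\Phi}\in\widehat{\mathcal S}_{\widehat{\mathcal Y}-I}$ into equivalent statements about the numerator $\widetilde{\mathbf X}+\mathbf Q\widetilde{\mathbf M}$ and denominator $\widetilde{\mathbf Y}+\mathbf Q\widetilde{\mathbf N}$ of the Youla-parametrized controller, then rewrite those statements as a single affine system in $\text{vec}(\mathbf Q)$ and split $\mathbf Q=\mathbf Q_0+\widehat{\mathbf Q}$ into a particular solution plus a null-space component.

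I would begin by exploiting the diagonal nature of $D:=(\widetilde{\mathbf Y}+\mathbf Q\widetilde{\mathbf N})^{\text{diag}}$: since left-multiplication by $D^{-1}$ only rescales rows, for every index pair $(i,j)$ one has $\mathbf{\Gamma}_{ij}\equiv 0\iff(\widetilde{\mathbf X}+\mathbf Q\widetilde{\mathbf M})_{ij}\equiv 0$, while on the off-diagonal $\mathbf{\Phi}_{ij}\equiv 0\iff(\widetilde{\mathbf Y}+\mathbf Q\widetilde{\mathbf N})_{ij}\equiv 0$; the diagonal entries of $\mathbf{\Phi}$ are identically zero directly from \eqref{eq:NRF_def_1}. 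Because $\widehat{\mathcal Y}^{\text{diag}}=I$, the requirement $\mathbf{\Phi}\in\widehat{\mathcal S}_{\widehat{\mathcal Y}-I}$ is then equivalent to $\widetilde{\mathbf Y}+\mathbf Q\widetilde{\mathbf N}\in\mathcal S_{\widehat{\mathcal Y}}$, while $\mathbf{\Gamma}\in\widehat{\mathcal S}_{\mathcal X}$ reduces to $\widetilde{\mathbf X}+\mathbf Q\widetilde{\mathbf M}\in\mathcal S_{\mathcal X}$.

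Next, invoking the vectorized characterization $\mathbf G\in\mathcal S_{\mathcal B}\iff(I-\text{diag}(\mathcal B))\text{vec}(\mathbf G)\equiv 0$ from Section~\ref{subsec:basic}, together with the Kronecker identity $\text{vec}(\mathbf Q\widetilde{\mathbf M})=(\widetilde{\mathbf M}^\top\otimes I)\text{vec}(\mathbf Q)$ (and its twin for $\widetilde{\mathbf N}$), the two sparsity conditions compress into a single affine equation in $\text{vec}(\mathbf Q)$. Substituting $\mathbf Q=\mathbf Q_0+\widehat{\mathbf Q}$ and using linearity, \eqref{eq:NRF_EMM} cancels the $\mathbf Q_0$-dependent and constant parts, while \eqref{eq:null_NRF} kills the $\widehat{\mathbf Q}$-dependent part, so the two contributions jointly enforce the desired zero pattern.

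The remaining task is to lift the set-membership from $\mathcal S_\cdot$ to $\widehat{\mathcal S}_\cdot$ and to verify that $\mathbf K$ is legitimately produced by the Youla map. Condition \eqref{eq:mid_prop} places $\mathbf Q$ in the admissible subset of Theorem~\ref{thm:Youla}~(b) and secures a proper inverse for $\widetilde{\mathbf Y}+\mathbf Q\widetilde{\mathbf N}$, so the induced controller from \eqref{eq:K_frac} is realizable; condition \eqref{eq:fin_prop} secures a proper inverse for $D$, hence both factors $((\widetilde{\mathbf Y}+\mathbf Q\widetilde{\mathbf N})^{\text{diag}})^{-1}(\widetilde{\mathbf Y}+\mathbf Q\widetilde{\mathbf N})$ and $((\widetilde{\mathbf Y}+\mathbf Q\widetilde{\mathbf N})^{\text{diag}})^{-1}(\widetilde{\mathbf X}+\mathbf Q\widetilde{\mathbf M})$ are proper, yielding $\mathbf{\Phi}\in\mathcal R_p^{m\times m}$ and $\mathbf{\Gamma}\in\mathcal R_p^{m\times p}$ as required by \eqref{eq:NRF_def_1}-\eqref{eq:NRF_def_2}. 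The main obstacle is really the bookkeeping that ties the diagonal structure of $D$ (which is what makes row-wise sparsity preserved under left inversion) to the convention $\widehat{\mathcal Y}^{\text{diag}}=I$, so that the single matrix $F_{\widehat{\mathcal Y}}$ simultaneously imposes sparsity on the off-diagonal entries of $\widetilde{\mathbf Y}+\mathbf Q\widetilde{\mathbf N}$ while leaving its diagonal free to match the fact that $\mathbf{\Phi}$ has zero diagonal automatically; once this encoding is spelled out, the rest is routine $\text{vec}$/Kronecker algebra combined with the properness arguments above.
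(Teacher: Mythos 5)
Your proposal is correct and follows essentially the same route as the paper's proof: vectorize the sparsity conditions on $\widetilde{\mathbf X}+\mathbf Q\widetilde{\mathbf M}$ and $\widetilde{\mathbf Y}+\mathbf Q\widetilde{\mathbf N}$ via the Kronecker identity, split $\mathbf Q=\mathbf Q_0+\widehat{\mathbf Q}$ into particular and null-space parts, and use \eqref{eq:mid_prop}--\eqref{eq:fin_prop} for well-posedness and properness. Your explicit observation that left-multiplication by the diagonal inverse only rescales rows (so the sparsity pattern transfers to $\mathbf{\Phi}$ and $\mathbf{\Gamma}$) is the same mechanism the paper invokes, just spelled out in more detail.
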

	\begin{proof}
		See the Appendix.\vspace{-1mm}
	\end{proof}
	
	\begin{remark}\label{rem:reuse_QI_orig}
		The equation \eqref{eq:NRF_EMM} can be solved for a stable $\text{vec}(\mathbf{Q}_0)$ as shown in \cite{VSOLVE}. Moreover, a least order solution can be obtained by employing the generalized minimum cover algorithm from \cite{VCOVER}. A benefit of this approach is that it computes a (stable) basis for $\text{Ker}\small\begin{bmatrix}
		F_{\mathcal{X}}(\widetilde{\mathbf{M}}^\top\otimes I)\\F_{\widehat{\mathcal{Y}}}(\widetilde{\mathbf{N}}^\top\otimes I)
		\end{bmatrix}$. Alternatively, a stable basis of least degree can be obtained as in \cite{VNULL}.\vspace{-1mm}
	\end{remark}
	
	\begin{remark}\label{rem:prop}
		Selecting a $\mathbf{Q}$ which ensures that $\det\big((\widetilde{\mathbf {Y}}+\mathbf Q\widetilde{\mathbf {N}})(\infty)\big)\neq0$, thus guaranteeing that the controller's TFM is well-posed, can be done numerically by using the fact that
		\vspace{-1mm}
		\begin{equation}\label{eq:BMI1}
			\hspace{-2mm}\begin{array}{l}
				\det\big(\widetilde{\mathbf {Y}}(\infty)+\mathbf{Q}(\infty)\widetilde{\mathbf {N}}(\infty)\big)\neq0\iff\\\hspace{-1mm}\iff\hspace{-1mm}\big(\widetilde{\mathbf {Y}}(\infty)+\mathbf{Q}(\infty)\widetilde{\mathbf {N}}(\infty)\big)^{\top}\big(\widetilde{\mathbf {Y}}(\infty)+\mathbf{Q}(\infty)\widetilde{\mathbf {N}}(\infty)\big)\succ 0.
			\end{array}\hspace{-2mm}\vspace{-1mm}
		\end{equation}
		To ensure $\det\big((\widetilde{\mathbf {Y}}+\mathbf Q\widetilde{\mathbf {N}})^{\text{diag}}(\infty)\big)\neq0$, we first denote by $e_i$ the $i^\text{th}$ vector of the canonical basis of $\mathbb{R}^{m\times 1}$ and impose that\vspace{-1mm}
		\begin{equation}\label{eq:BMI2}
			\hspace{-3mm}\begin{array}{l}
				e_i^{\top}(\widetilde{\mathbf {Y}}(\infty)+\mathbf Q(\infty)\widetilde{\mathbf {N}}(\infty))^{\top}e_i\times\\\quad\quad\times e_i^{\top}(\widetilde{\mathbf {Y}}(\infty)+\mathbf Q(\infty)\widetilde{\mathbf {N}}(\infty))e_i\succ 0,\ \forall i\in 1:m.
			\end{array}\hspace{-2mm}\vspace{-1mm}
		\end{equation}
		The bilinear matrix inequalities in \eqref{eq:BMI1}-\eqref{eq:BMI2} will be convexified and solved iteratively via the procedure given in Section \ref{sec:algo}.
	\end{remark}
	
	
	\vspace{-6mm}

	\subsection{Robust stabilization and augmented sparsity}\label{subsec:rob_stab}

	In this subsection, we show how to obtain a controller of type \eqref{eq:K_frac} whose NRF implementation \eqref{eq:NRF_explicit} stabilizes all network models $\mathbf{G}_{\mathbf{\Delta}}$ in a class $\mathcal{C}_{\mathbf{G}}^\epsilon$ and how this technique can be used to obtain a sparse control architecture. However, before this, we begin by defining the aforementioned class of TFMs.
	
	The class $\mathcal{C}_{\mathbf{G}}^\epsilon$, introduced in Section \ref{subsec:prob_st}, is expressed in terms of a stable right coprime factorization (RCF) of $\mathbf{G}=\widehat{\mathbf{N}}\widehat{\mathbf{M}}^{-1}\in\mathcal{R}^{p\times m}$, \emph{i.e.}, $\widehat{\mathbf{N}},\widehat{\mathbf{M}}\in\mathcal{RH}_\infty$ and $\exists\ \widetilde{\mathbf{X}},\widetilde{\mathbf{Y}}\in\mathcal{RH}_\infty$ so that $\widetilde{\mathbf{Y}}\widehat{\mathbf{M}}-\widetilde{\mathbf{X}}\widehat{\mathbf{N}}=I$, which is additionally normalized, \emph{i.e.}, $
	\widehat{\mathbf{N}}^{\top}(-s)\widehat{\mathbf{N}}(s)+\widehat{\mathbf{M}}^{\top}(-s)\widehat{\mathbf{M}}(s)=I
	$. With \emph{any} (see \cite{gap}) such stable normalized RCF (NRCF) and $\epsilon\in(0,1]$, we define\vspace{-2mm}
	\begin{equation}\label{eq:class}
	\small\begin{array}{llr}
	\hspace{-1mm}\mathcal{C}_{\mathbf{G}}^\epsilon\hspace{-1mm}&\hspace{-2mm}:=\hspace{-2mm}&\hspace{-1mm}\Big\{\big( \widehat{\mathbf N}+\mathbf{\Delta}_{\widehat{\mathbf{N}}}\big)\big(\widehat{\mathbf M}+\mathbf{\Delta}_{\widehat{\mathbf{M}}}\big)^{-1}\hspace{-1mm},\mathbf{\Delta}_{\widehat{\mathbf{N}}},\mathbf{\Delta}_{\widehat{\mathbf{M}}}\in   \mathcal{RH}_\infty,\\
	&&\det\big(\widehat{\mathbf M}+\mathbf{\Delta}_{\widehat{\mathbf{M}}}\big)\not\equiv 0,\ \left\|\begin{bmatrix}\mathbf{\Delta}_{\widehat{\mathbf{N}}}^\top& \mathbf{\Delta}_{\widehat{\mathbf{M}}}^\top\end{bmatrix}\hspace{-2mm}\phantom{.}^\top\right\|_\infty<\epsilon\Big\}.
	\end{array}\normalsize\vspace{-1mm}
	\end{equation}
	
	Clearly, in order to manipulate $\mathcal{C}_{\mathbf{G}}^\epsilon$, we must first obtain a stable NRCF of $\mathbf{G}$. While \eqref{eq:dublu2} readily provides a stable RCF of $\mathbf{G}$, a stable NRCF can be obtained via the following result.\vspace{-2mm}
	
	\begin{lemma}\label{lem:NRCF}
		Let $E_r$ be an invertible matrix and let also $\Lambda(A_r-sE_r)\subset\mathbb{C}^-$. Let the TFM\vspace{-1mm}
		\begin{equation}\label{eq:real_r}
		\begin{bmatrix}
		\mathbf{N}^\top&\mathbf{M}^\top
		\end{bmatrix}^\top=\left[\footnotesize\begin{array}{c|c}
		A_{r}-sE_{r}&B_{r}\\\hline C_{r}&D_{r}
		\end{array}\right]\in\mathcal{RH}_\infty^{(p+m)\times m}\vspace{-1mm}
		\end{equation}
		designate a stable RCF of $\mathbf{G}=\mathbf{N}\mathbf{M}^{-1}\in\mathcal{R}^{p\times m}$ and let $H_r\in\mathbb{R}^{m\times m}$ be invertible and satisfy $H_r^{\top}H_r=D_r^{\top}D_r$. Then:
		
		\begin{enumerate}
			\item[$\mathbf{(a)}$] The GCARE from \eqref{eq:GCARE} has a symmetric stabilizing solution, $X_r$, along with a stabilizing feedback, $F_r$;\smallskip
			
			\item[$\mathbf{(b)}$] For $\mathbf{G}_0:=\hspace{-1mm}\left[\footnotesize\begin{array}{c|c}
				A_r-sE_r&B_r\\\hline -H_rF_r&H_r
			\end{array}\right]$, we get that $\small\begin{bmatrix}
				\widehat{\mathbf{N}}^\top&\hspace{-3mm}\widehat{\mathbf{M}}^\top
			\end{bmatrix}\hspace{-2mm}\phantom{.}^\top\hspace{-1mm}:=\small\begin{bmatrix}
				\mathbf{N}^\top&\hspace{-3mm}\mathbf{M}^\top
			\end{bmatrix}\hspace{-2mm}\phantom{.}^\top\mathbf{G}_0^{-1}$ designates a stable NRCF of $\mathbf{G}$.
		\end{enumerate}
	\end{lemma}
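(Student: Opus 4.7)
The plan is to prove the two parts of the lemma separately: part $\mathbf{(a)}$ by invoking descriptor Riccati theory, and part $\mathbf{(b)}$ by state-space manipulations combined with a spectral-factorization identity derived from the GCARE.

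For part $\mathbf{(a)}$, I would first note that $D_r^{\top}D_r=H_r^{\top}H_r$ is invertible since $H_r$ is, so together with $E_r$ invertible the GCARE \eqref{eq:GCARE} is well-defined. A standard existence theorem in the descriptor Riccati literature (e.g., \cite{RiccBig}) yields a symmetric stabilizing solution once $(A_r-sE_r,B_r)$ is (strongly) stabilizable and the system pencil $\bigl[\begin{smallmatrix} A_r - sE_r & -B_r \\ C_r & D_r \end{smallmatrix}\bigr]$ has no invariant zeros on $j\mathbb{R}\cup\{\infty\}$. Stabilizability is trivial, since $\Lambda(A_r-sE_r)\subset\mathbb{C}^-$ and $E_r$ is invertible. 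For the zero condition, I would exploit the coprimeness of $(\mathbf{N},\mathbf{M})$: any Bezout identity $\widetilde{\mathbf{Y}}\mathbf{M}-\widetilde{\mathbf{X}}\mathbf{N}=I$ produces a stable left inverse for $\bigl[\mathbf{N}^{\top}\ \mathbf{M}^{\top}\bigr]^{\top}$, forcing that pencil to have full column rank at every $s\in\mathbb{C}$ and at infinity. The stabilizing feedback $F_r$ then arises by definition from the obtained $X_r$.

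For part $\mathbf{(b)}$, I would invert $\mathbf{G}_0$ using the standard formula for a descriptor realization with invertible feedthrough, obtaining $\mathbf{G}_0^{-1}=\bigl[\begin{smallmatrix} A_r+B_rF_r-sE_r & B_rH_r^{-1} \\ F_r & H_r^{-1} \end{smallmatrix}\bigr]$. Because $F_r$ stabilizes the pencil and $E_r$ is invertible, this inverse has an admissible pencil, so $\mathbf{G}_0^{-1}\in\mathcal{RH}_\infty$; as $\mathbf{G}_0$ itself is stable, $\mathbf{G}_0$ is a unit of $\mathcal{RH}_\infty$. Consequently $\bigl[\widehat{\mathbf{N}}^{\top}\ \widehat{\mathbf{M}}^{\top}\bigr]^{\top}=\bigl[\mathbf{N}^{\top}\ \mathbf{M}^{\top}\bigr]^{\top}\mathbf{G}_0^{-1}$ is a stable RCF (coprimeness is preserved under right multiplication by a unit) and yields $\widehat{\mathbf{N}}\widehat{\mathbf{M}}^{-1}=\mathbf{N}\mathbf{M}^{-1}=\mathbf{G}$.

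What remains, and what I expect to be the main obstacle, is the normalization $\widehat{\mathbf{N}}^{\top}(-s)\widehat{\mathbf{N}}(s)+\widehat{\mathbf{M}}^{\top}(-s)\widehat{\mathbf{M}}(s)=I$; since $\mathbf{G}_0$ is a unit this reduces to the spectral identity $\mathbf{N}^{\top}(-s)\mathbf{N}(s)+\mathbf{M}^{\top}(-s)\mathbf{M}(s)=\mathbf{G}_0^{\top}(-s)\mathbf{G}_0(s)$. Expanding both sides with the realizations of $[\mathbf{N}^{\top}\ \mathbf{M}^{\top}]^{\top}$ and of $\mathbf{G}_0$, using $H_r^{\top}H_r=D_r^{\top}D_r$ and the stabilizing-feedback relation $D_r^{\top}D_rF_r=-(B_r^{\top}X_rE_r+D_r^{\top}C_r)$, the difference collapses to the central block $A_r^{\top}X_rE_r+E_r^{\top}X_rA_r+C_r^{\top}C_r-F_r^{\top}(D_r^{\top}D_r)F_r$ sandwiched between $B_r^{\top}(-sE_r^{\top}-A_r^{\top})^{-1}$ on the left and $(sE_r-A_r)^{-1}B_r$ on the right. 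Substituting $F_r$ into the GCARE \eqref{eq:GCARE} shows this central block vanishes, finishing the proof. The delicate part is the descriptor adjoint bookkeeping (shifting the $E_r^{\top}$ factors through the resolvents and keeping track of the $(-s)$-arguments) rather than any conceptually new ingredient.
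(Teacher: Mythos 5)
Your proof is correct. For part $\mathbf{(a)}$ you follow essentially the same route as the paper: the paper makes the reduction explicit by setting $\widehat{A}_r:=E_r^{-1}A_r$, $\widehat{B}_r:=E_r^{-1}B_r$, $\widehat{X}_r:=E_r^{\top}X_rE_r$ so that the GCARE becomes a standard CARE handled by Corollary 13.23 of \cite{zhou}, whereas you invoke descriptor Riccati theory directly — equivalent here since $E_r$ is invertible. Your zero-condition argument via the Bezout identity is exactly the paper's, though you overstate it slightly: a \emph{stable} left inverse only forces full column rank on $\overline{\mathbb{C}^+}\cup\{\infty\}$, not on all of $\mathbb{C}$; this is harmless because only $j\mathbb{R}\cup\{\infty\}$ is needed (and the rank condition at $\infty$ is already implied by the invertibility of $H_r^\top H_r=D_r^\top D_r$). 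The real difference is part $\mathbf{(b)}$: the paper disposes of it by citing Proposition 1 in \cite{VNCF}, while you supply the standard self-contained argument — inverting $\mathbf{G}_0$ to exhibit it as a unit of $\mathcal{RH}_\infty$ (which preserves coprimeness and the quotient $\mathbf{N}\mathbf{M}^{-1}$), then reducing normalization to the spectral identity $\mathbf{N}^{\top}(-s)\mathbf{N}(s)+\mathbf{M}^{\top}(-s)\mathbf{M}(s)=\mathbf{G}_0^{\top}(-s)\mathbf{G}_0(s)$ and collapsing the residual central block via the GCARE and the feedback relation $D_r^\top D_rF_r=-(B_r^\top X_rE_r+D_r^\top C_r)$. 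This buys self-containedness at the cost of the descriptor-adjoint bookkeeping you flag; both are legitimate, and your computation does close correctly.
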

	
	\begin{proof}
		For point $\mathbf{(a)}$, see the Appendix. Point $\mathbf{(b)}$ is precisely Proposition 1 in \cite{VNCF}.\vspace{-1mm}
	\end{proof}
	
	Having now the ability to express the TFMs that make up \eqref{eq:class}, we turn our attention to characterizing stabilizing controllers whose NRF implementations of type \eqref{eq:NRF_explicit} stabilize all TFMs in $\mathcal{C}_{\mathbf{G}}^\epsilon$, for a given $\epsilon\in(0,1]$. The following result is central to this section and offers the means to do just so.\vspace{-1mm}
	
	
	\begin{theorem}\label{thm:robstab}
		
		Let $\mathbf{G}\in\mathcal{R}^{p\times m}$ be given by a strongly stabilizable and detectable realization \eqref{eq:tfm} and let $F$ ensure that $A+BF-sE$ is admissible. Let also $\mathbf{G}=\mathbf{N}\mathbf{M}^{-1}$ be the stable RCF induced by F as in \eqref{eq:dublu2}, and for which a realization as in \eqref{eq:real_r} is obtained (recall Remark \ref{rem:dss2ss}), having $E_r$ invertible and $\Lambda(A_r-sE_r)\subset\mathbb{C}^-$. Let $F_r$ be the stabilizing feedback of the GCARE from \eqref{eq:GCARE} and let $\epsilon\in(0,1]$ along with $H_r\in\mathbb{R}^{m\times m}$ invertible, such that $H_r^{\top}H_r=D_r^{\top}D_r$. Then:
		
		\begin{enumerate}
			
			\item[$\mathbf{(a)}$] There exists a class of stabilizing controllers ${\mathbf{K}}\in\mathcal{R}^{m\times p}$\hspace{-1mm}, based upon a DCF over $\mathcal{RH}_\infty$ of $\mathbf{T}_{22}^\epsilon$, for the system
			\begin{align}\label{eq:gen_rob}
			\hspace{-1mm}{\mathbf{T}}^{\epsilon}\hspace{-0.5mm}:&=\hspace{-1mm}\small\left[\begin{array}{ll:l}
			0&-\epsilon\widehat{\mathbf{M}}^{-1}&\epsilon\widehat{\mathbf{M}}^{-1}\\\hdashline I&\phantom{\epsilon}-\mathbf{G}&\phantom{\epsilon}\mathbf{G}
			\end{array}\right]\hspace{-0.5mm}=\hspace{-0.5mm}\left[\begin{array}{c:c}
			{\mathbf{T}}_{11}^{\epsilon}&{\mathbf{T}}_{12}^{\epsilon}\\\hdashline{\mathbf{T}}_{21}^{\epsilon}&{\mathbf{T}}_{22}^{\epsilon}
			\end{array}\right]\nonumber\\&=\hspace{-1mm}\footnotesize\left[\begin{array}{cc|cc:c}
			A_r-sE_r&-B_rF&0&-B_r&B_r\\
			0&A-sE&0&-B&B\\\hline
			-\epsilon H_rF_r&-\epsilon H_rF&0&-\epsilon H_r&\epsilon H_r\\\hdashline
			0&C&I&-D&D
			\end{array}\right]\hspace{-1mm};
			\end{align}
			
			\item[$\mathbf{(b)}$]  Let $\mathbf{K}$ belong to the class from $\mathbf{(a)}$. If $
			\left\|\mathcal{F}_\ell({\mathbf{T}}^{\epsilon}\hspace{-0.5mm},\hspace{-0.5mm}{\mathbf{K}})\right\|_\infty\hspace{-1mm}\leq1
			$ and $\mathbf{K}$ admits an NRF implementation as in \eqref{eq:NRF_def_1}-\eqref{eq:NRF_def_2}, then the control laws from \eqref{eq:NRF_explicit} stabilize all $ \mathbf{G}_\mathbf{\Delta}\in\mathcal{C}_{\mathbf{G}}^\epsilon$.
		\end{enumerate}
	\end{theorem}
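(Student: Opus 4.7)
My plan is to handle parts $\mathbf{(a)}$ and $\mathbf{(b)}$ separately, invoking Theorem \ref{thm:Youla} for the first and a Glover--McFarlane-style small-gain argument for the second, and then bridging from input-output robust stability to internal stability of the NRF implementation.

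For part $\mathbf{(a)}$, the goal is to verify that the descriptor realization of $\mathbf{T}^\epsilon$ given in \eqref{eq:gen_rob} satisfies the hypotheses of Theorem \ref{thm:Youla}, namely strong stabilizability of the augmented pair (with control-input matrix $B_2 = [B_r^\top\ B^\top]^\top$) and strong detectability of the augmented pair (with measurement matrix $C_2 = [\,0\ \ C\,]$). Since the augmented pencil is block upper triangular with $A_r - sE_r$ in the $(1,1)$ position, and since $E_r$ is invertible with $\Lambda(A_r-sE_r)\subset\mathbb{C}^-$, that block contributes no uncontrollable or unobservable modes in $\mathbb{C}\setminus\mathbb{C}^-$ nor at infinity. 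The full-row-rank tests therefore collapse to those on $(A-sE, B)$ and $(C, A-sE)$, which hold by hypothesis, and invoking Theorem \ref{thm:Youla} on $\mathbf{T}^\epsilon$ then yields a DCF of $\mathbf{T}_{22}^\epsilon = \mathbf{G}$ over $\mathcal{RH}_\infty$, together with the Youla-parametrized class of controllers that internally stabilize $\mathbf{T}^\epsilon$.

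For part $\mathbf{(b)}$, I would first recognize $\mathcal{F}_\ell(\mathbf{T}^\epsilon, \mathbf{K})$ as a standard robust-stabilization TFM. Using the push-through identity $\mathbf{K}(I-\mathbf{G}\mathbf{K})^{-1} = (I-\mathbf{K}\mathbf{G})^{-1}\mathbf{K}$ together with $I - \mathbf{K}\mathbf{G} = (\widehat{\mathbf{M}} - \mathbf{K}\widehat{\mathbf{N}})\widehat{\mathbf{M}}^{-1}$ (which follows from $\mathbf{G} = \widehat{\mathbf{N}}\widehat{\mathbf{M}}^{-1}$), a short manipulation gives
\begin{equation*}
\mathcal{F}_\ell(\mathbf{T}^\epsilon, \mathbf{K}) = \epsilon (\widehat{\mathbf{M}} - \mathbf{K}\widehat{\mathbf{N}})^{-1}\begin{bmatrix}\mathbf{K} & -I\end{bmatrix},
\end{equation*}
so that $\|\mathcal{F}_\ell(\mathbf{T}^\epsilon,\mathbf{K})\|_\infty \leq 1$ is equivalent to $\|(\widehat{\mathbf{M}}-\mathbf{K}\widehat{\mathbf{N}})^{-1}[\mathbf{K}, -I]\|_\infty \leq 1/\epsilon$. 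For any $\mathbf{G}_\mathbf{\Delta} = (\widehat{\mathbf{N}}+\mathbf{\Delta}_{\widehat{\mathbf{N}}})(\widehat{\mathbf{M}}+\mathbf{\Delta}_{\widehat{\mathbf{M}}})^{-1} \in \mathcal{C}_\mathbf{G}^\epsilon$, introducing the intermediate signal $v := (\widehat{\mathbf{M}}+\mathbf{\Delta}_{\widehat{\mathbf{M}}})^{-1}u$ in the $\mathbf{K}$--$\mathbf{G}_\mathbf{\Delta}$ feedback unravels the canonical uncertainty loop with forward gain $(\widehat{\mathbf{M}}-\mathbf{K}\widehat{\mathbf{N}})^{-1}[\mathbf{K}, -I]$ and feedback gain $[\mathbf{\Delta}_{\widehat{\mathbf{N}}}^\top\ \mathbf{\Delta}_{\widehat{\mathbf{M}}}^\top]^\top$. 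Since the latter has $\mathcal{H}_\infty$ norm strictly less than $\epsilon$, the loop gain is strictly less than one and the small-gain theorem certifies input-output stability of all closed-loop transfers.

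The last step is to lift this TFM-level stability to internal stability of the distributed NRF implementation by invoking Remark \ref{rem:implem}: every $\mathbf{G}_\mathbf{\Delta} \in \mathcal{C}_\mathbf{G}^\epsilon$ admits a strongly stabilizable and detectable descriptor realization built from its stable coprime factors (which remain coprime for $\epsilon\in(0,1]$ by a perturbation argument on the Bezout identity), and the pair $(\mathbf{\Phi},\mathbf{\Gamma})$ is a well-posed proper realization of $\mathbf{K}$, so the descriptor analogues of Lemmas 5.2--5.3 of \cite{zhou} referenced in Remark \ref{rem:implem} apply with $\mathbf{G}_\mathbf{\Delta}$ in place of $\mathbf{G}_{22}^n$. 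The main obstacle I anticipate is precisely this last bridging step, since the small-gain conclusion is a TFM-level (input-output) statement while the theorem asks for stabilization in the descriptor-level sense required by the NRF implementation; propagating the norm bound through the factorization $\mathbf{K} = (I-\mathbf{\Phi})^{-1}\mathbf{\Gamma}$ down to each subsignal of \eqref{eq:NRF_explicit} is what requires the most care.
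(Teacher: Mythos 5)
Your part $\mathbf{(a)}$ is sound and matches the paper's argument in substance: the paper constructs the augmented admissible feedbacks $\widehat F=\begin{bmatrix}0&\widetilde F\end{bmatrix}$ and $\widehat H=\begin{bmatrix}0&\widetilde H^\top\end{bmatrix}^\top$ explicitly, which is equivalent to your observation that the rank tests collapse onto $(A-sE,B)$ and $(C,A-sE)$ because the $A_r-sE_r$ block is stable with $E_r$ invertible. Your computation $\mathcal{F}_\ell(\mathbf{T}^\epsilon,\mathbf{K})=\epsilon(\widehat{\mathbf{M}}-\mathbf{K}\widehat{\mathbf{N}})^{-1}\begin{bmatrix}\mathbf{K}&-I\end{bmatrix}$ is also correct, and the small-gain mechanism you invoke is the same one the paper uses (Theorem 8.1 of the cited reference).

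The genuine gap is the final bridging step, which you correctly identify as the main obstacle but do not close. Your small-gain argument is run on the loop between $\mathbf{K}$ and $\mathbf{G}_{\mathbf{\Delta}}$, so it certifies stability of that input-output loop; the theorem, however, asserts that the \emph{distributed NRF implementation} \eqref{eq:NRF_explicit} stabilizes $\mathbf{G}_{\mathbf{\Delta}}$, and that implementation is the feedback interconnection of $\begin{bmatrix}\mathbf{\Phi}&\mathbf{\Gamma}\end{bmatrix}$ with the augmented plant $\begin{bmatrix}I&\mathbf{G}_{\mathbf{\Delta}}^\top\end{bmatrix}^\top$, including the communication-disturbance channels into the $\mathbf{u}_i$. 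Invoking Remark \ref{rem:implem} does not suffice, since that remark concerns the nominal plant and presupposes the very internal-stability statement you are trying to establish for the perturbed one. The paper's device for closing this is the auxiliary system $\overline{\mathbf{T}}$ of \eqref{eq:T_bar}, chosen so that $\mathcal{F}_u\big(\overline{\mathbf{T}},\begin{bmatrix}\mathbf{\Delta}_{\widehat{\mathbf{N}}}^\top&\mathbf{\Delta}_{\widehat{\mathbf{M}}}^\top\end{bmatrix}^\top\big)=\begin{bmatrix}I&\mathbf{G}_{\mathbf{\Delta}}^\top\end{bmatrix}^\top$ while $\mathcal{F}_\ell(\mathbf{T}^\epsilon,\mathbf{K})=\epsilon\,\mathcal{F}_\ell\big(\overline{\mathbf{T}},\begin{bmatrix}\mathbf{\Phi}&\mathbf{\Gamma}\end{bmatrix}\big)$; the norm hypothesis then transfers verbatim to the NRF loop, the small-gain theorem is applied directly to $\big(\begin{bmatrix}\mathbf{\Phi}&\mathbf{\Gamma}\end{bmatrix},\begin{bmatrix}I&\mathbf{G}_{\mathbf{\Delta}}^\top\end{bmatrix}^\top\big)$, and the main result of \cite{NRF} converts internal stability of that loop into the claimed conclusion. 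Without this (or an equivalent) identity, "propagating the norm bound through $\mathbf{K}=(I-\mathbf{\Phi})^{-1}\mathbf{\Gamma}$" remains an unproved assertion rather than a step of the proof.
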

	\begin{proof}
		See the Appendix.\vspace{-1mm}
	\end{proof}

	\begin{remark}\label{rem:aug_sparse}
		The key to bypassing the feasibility of the model-matching problem tackled in Proposition \ref{prop:NRF} lies with judiciously employing Theorem \ref{thm:robstab}.
		Let our network's TFM be $\overline{\mathbf{G}}\in\mathcal{R}^{p\times m}$ and assume that the chosen NRF architecture is either infeasible or difficult to satisfy for the available DCFs over $\mathcal{RH}_\infty$ of $\overline{\mathbf{G}}$. Then, we may resort to an approximation of $\overline{\mathbf{G}}$, denoted $\mathbf{G}\in\mathcal{R}^{p\times m}$, which satisfies $\overline{\mathbf{G}}\in\mathcal{C}_{\mathbf{G}}^\epsilon$ and which is described by a DCF over $\mathcal{RH}_\infty$ that supports the desired NRF architecture. By obtaining control laws of type \eqref{eq:NRF_explicit} with the desired sparsity structure and which stabilize all $\mathbf{G}_{\mathbf{\Delta}}\in\mathcal{C}_{\mathbf{G}}^\epsilon$, these sparse control laws will also stabilize $\overline{\mathbf{G}}$. A concrete example of this design procedure will be shown in Section \ref{sec:examp}.\vspace{-1mm}
	\end{remark}

	Although we now possess the means to characterize robustly stabilizing NRF-based implementations of the controller, note that these are obtained by employing a DCF over $\mathcal{RH}_\infty$ whose realization is of the same order as that in \eqref{eq:gen_rob}. The next result shows how to obtain descriptor representations for the DCF over $\mathcal{RH}_\infty$ with the \emph{same order} as that of the network's model.
	
	\begin{proposition}\label{prop:size_reduce}
		Let the same framework, hypotheses and notation hold as in the statement of Theorem \ref{thm:robstab} and let $\mathbf{T}^\epsilon$ be defined as in \eqref{eq:gen_rob}. Then, we have that:
		\begin{enumerate}
			\item[$\mathbf{(a)}$] For any $H$ so that the pencil $A+HC-sE$ is admissible, a DCF over $\mathcal{RH}_\infty$ of $\mathbf{T}^\epsilon_{22}$ is given by
			\begin{subequations}
				\begin{align}
				&\hspace{-4mm}\footnotesize\left[\begin{array}{r:r}
				\widetilde{\mathbf Y}^\epsilon&\hspace{-1mm}-\widetilde{\mathbf X}^\epsilon\\\hdashline
				\hspace{-1mm}-\widetilde{\mathbf N}^\epsilon&\widetilde{\mathbf M}^\epsilon
				\end{array}\right]\hspace{-1mm}:=\hspace{-1.5mm}\left[\begin{array}{c|c:c}
				\hspace{-1mm}A+HC-sE\hspace{-0.5mm}&\hspace{-1mm}-B-HD&H\hspace{-1mm}\\\hline
				F&I&0\\\hdashline
				C&-D&I
				\end{array}\right]\hspace{-1mm},\hspace{-1mm}\normalsize\label{eq:dublu3}\\
				&\hspace{-2mm}\footnotesize\left[\begin{array}{r:r}
				\mathbf M^\epsilon&\mathbf X^\epsilon\\\hdashline
				\mathbf N^\epsilon&\mathbf Y^\epsilon
				\end{array}\right]\hspace{-1mm}:=\hspace{-1mm}\footnotesize\left[\begin{array}{c|c:c}
				\hspace{-1mm}A+BF-sE&B&-H\\\hline
				F&I&0\\\hdashline
				C+DF&D&I
				\end{array}\right]\normalsize;\label{eq:dublu4}
				\end{align}
			\end{subequations}
			
			\item[$\mathbf{(b)}$] For any stabilizing controller obtained using \eqref{eq:dublu3}-\eqref{eq:dublu4} and an arbitrary $\mathbf{Q}\in\mathcal{RH}_\infty^{m\times p}$, we may express
			$
			\mathcal{F}_\ell({\mathbf{T}}^{\epsilon},{\mathbf{K}})=\mathbf{T}^\epsilon_1+\mathbf{T}^\epsilon_2\mathbf{Q}\mathbf{T}^\epsilon_3,
			$
			where we have
			\begin{subequations}
				\begin{align}
				\mathbf T_{1}^\epsilon:=&\ {\mathbf{T}}_{11}^{\epsilon}+{\mathbf{T}}_{12}^{\epsilon}{\mathbf X}^\epsilon\widetilde{\mathbf M}^\epsilon{\mathbf{T}}_{21}^{\epsilon}\nonumber\\
				=&\footnotesize\left[\begin{array}{cc|cc}
				\hspace{-1mm}A_r-sE_r&\hspace{-2mm}-B_rF&0&\hspace{-2mm}-B_r\\
				0&\hspace{-2mm}A+HC-sE&H&\hspace{-2mm}-B-HD\hspace{-1mm}\\\hline
				-\epsilon H_rF_r&\hspace{-2mm}-\epsilon H_rF&0&\hspace{-2mm}-\epsilon H_r
				\end{array}\right]\hspace{-1mm},\hspace{-1mm}\label{eq:afin5}\\
				\mathbf T_{2}^\epsilon:=&\ {\mathbf{T}}_{12}^{\epsilon}\mathbf M^\epsilon=\left[\footnotesize\begin{array}{c|c}
				A_r-sE_r&B_r\\\hline -\epsilon H_rF_r&\epsilon H_r
				\end{array}\right],\label{eq:afin6}\\
				\mathbf T_{3}^\epsilon:=&\ \widetilde{\mathbf M}^\epsilon{\mathbf{T}}_{21}^{\epsilon}=\left[\footnotesize\begin{array}{c|cc}
				\hspace{-1mm}A+HC-sE&\hspace{-1mm}H&\hspace{-2mm}-B-HD\hspace{-1mm}\\\hline
				C&I&\hspace{-2mm}-D
				\end{array}\right]\hspace{-1mm}.\hspace{-1mm}\label{eq:afin7}
				\end{align}
			\end{subequations}
		\end{enumerate}
	\end{proposition}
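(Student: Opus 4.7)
\textbf{Part (a)} is immediate from Theorem~\ref{thm:Youla}(a) applied to $\mathbf{T}^\epsilon_{22}=\mathbf{G}$ using its realization \eqref{eq:tfm}. The strong stabilizability/detectability hypotheses of Theorem~\ref{thm:robstab} transfer verbatim, and the admissibility of $A+BF-sE$ is assumed while $H$ is selected so that $A+HC-sE$ is admissible. With these ingredients, \eqref{eq:dublu3}--\eqref{eq:dublu4} are obtained by setting $(B_2,C_2,D_{22})=(B,C,D)$ in \eqref{eq:dublu1}--\eqref{eq:dublu2}.

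\textbf{Part (b)} uses the affine parametrization of Theorem~\ref{thm:Youla}(c): given the DCF from part (a) and any $\mathbf{Q}\in\mathcal{RH}_\infty^{m\times p}$ for which $\mathbf{K}$ in \eqref{eq:K_frac} is well-posed, one has $\mathcal{F}_\ell(\mathbf{T}^\epsilon,\mathbf{K})=\mathbf{T}_1^\epsilon+\mathbf{T}_2^\epsilon\mathbf{Q}\mathbf{T}_3^\epsilon$ with $\mathbf{T}_1^\epsilon=\mathbf{T}_{11}^\epsilon+\mathbf{T}_{12}^\epsilon\mathbf{X}^\epsilon\widetilde{\mathbf{M}}^\epsilon\mathbf{T}_{21}^\epsilon$, $\mathbf{T}_2^\epsilon=\mathbf{T}_{12}^\epsilon\mathbf{M}^\epsilon$ and $\mathbf{T}_3^\epsilon=\widetilde{\mathbf{M}}^\epsilon\mathbf{T}_{21}^\epsilon$. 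I would then simplify each term separately. For $\mathbf{T}_2^\epsilon=\epsilon\widehat{\mathbf{M}}^{-1}\mathbf{M}^\epsilon$, note that \eqref{eq:dublu4} and the realization of $\mathbf{M}$ induced by $F$ via \eqref{eq:dublu2} coincide, so $\mathbf{M}^\epsilon=\mathbf{M}$; then Lemma~\ref{lem:NRCF}(b) yields $\widehat{\mathbf{M}}^{-1}\mathbf{M}=\mathbf{G}_0$, giving $\mathbf{T}_2^\epsilon=\epsilon\mathbf{G}_0$, which is exactly \eqref{eq:afin6}. For $\mathbf{T}_3^\epsilon=\widetilde{\mathbf{M}}^\epsilon\begin{bmatrix}I&-\mathbf{G}\end{bmatrix}$, the Bezout relation from \eqref{eq:bez} in its form $\widetilde{\mathbf{M}}^\epsilon\mathbf{N}=\widetilde{\mathbf{N}}^\epsilon\mathbf{M}$ yields $\widetilde{\mathbf{M}}^\epsilon\mathbf{G}=\widetilde{\mathbf{N}}^\epsilon$, so $\mathbf{T}_3^\epsilon=\begin{bmatrix}\widetilde{\mathbf{M}}^\epsilon&-\widetilde{\mathbf{N}}^\epsilon\end{bmatrix}$; concatenating the two input columns of \eqref{eq:dublu3} that share the pencil $A+HC-sE$ yields \eqref{eq:afin7}.

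For $\mathbf{T}_1^\epsilon$ I would recognize it as the nominal closed loop $\mathcal{F}_\ell(\mathbf{T}^\epsilon,\mathbf{K}_0)$ corresponding to $\mathbf{Q}=0$, with $\mathbf{K}_0:=\mathbf{X}^\epsilon(\mathbf{Y}^\epsilon)^{-1}$ admitting the observer-based descriptor realization $E\dot{\hat{x}}=(A+BF+HC+HDF)\hat{x}-Hy$, $u=F\hat{x}$. Closing the loop with \eqref{eq:gen_rob} produces an augmented system of order $n_r+2n$ in the states $(\xi_1,\xi_2,\hat{x})$. Introducing the standard error coordinate $\tilde{x}:=\xi_2-\hat{x}$, a direct computation shows that the $\hat{x}$ subsystem becomes fully decoupled from $(\xi_1,\tilde{x})$ and does not influence the performance output $z$; eliminating it yields precisely the $(n_r+n)$-order realization \eqref{eq:afin5}.

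\textbf{Main obstacle.} The only delicate point is the $\mathbf{T}_1^\epsilon$ computation: one must verify that the consistent appearance of the same $F$ in both the $(-B_rF,-\epsilon H_rF)$ columns of \eqref{eq:gen_rob} and the observer-based realization of $\mathbf{K}_0$ is exactly what makes the $\hat{x}$ dynamics decouple from $(\xi_1,\tilde{x})$ under the change of variable $\tilde{x}=\xi_2-\hat{x}$, so that the block-triangular structure of \eqref{eq:afin5} and the absence of $\hat{x}$ in the output survive the reduction. The remaining manipulations are routine descriptor-system algebra.
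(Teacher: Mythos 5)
Your proposal is correct, but it reaches the stated formulas by a route that genuinely differs from the paper's. The paper applies Theorem~\ref{thm:Youla} to the order-$(n_r+n)$ realization of $\mathbf{T}^\epsilon_{22}$ extracted from \eqref{eq:gen_rob}, using the padded feedbacks $\widehat F:=\begin{bmatrix}0&F\end{bmatrix}$ and $\widehat H:=\begin{bmatrix}0&H^\top\end{bmatrix}^\top$ (whose admissibility rests on $\Lambda(A_r-sE_r)\subset\mathbb{C}^-$), and then obtains \eqref{eq:dublu3}--\eqref{eq:dublu4} and \eqref{eq:afin5}--\eqref{eq:afin7} by eliminating the unobservable modes from the resulting realizations of type \eqref{eq:dublu1}--\eqref{eq:dublu2} and \eqref{eq:afin2}--\eqref{eq:afin4}. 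You instead build the DCF directly on the order-$n$ realization of $\mathbf{G}=\mathbf{T}^\epsilon_{22}$ — legitimate, since both a DCF and the affine identity $\mathcal{F}_\ell(\mathbf{T}^\epsilon,\mathbf{K})=\mathbf{T}^\epsilon_{11}+\mathbf{T}^\epsilon_{12}(\mathbf{X}^\epsilon+\mathbf{M}^\epsilon\mathbf{Q})\widetilde{\mathbf{M}}^\epsilon\mathbf{T}^\epsilon_{21}$ are properties of the TFMs rather than of any particular realization — and you then evaluate the three products at the TFM level: $\mathbf{T}^\epsilon_2=\epsilon\widehat{\mathbf{M}}^{-1}\mathbf{M}^\epsilon=\epsilon\mathbf{G}_0$ via Lemma~\ref{lem:NRCF}, and $\mathbf{T}^\epsilon_3=\begin{bmatrix}\widetilde{\mathbf{M}}^\epsilon&-\widetilde{\mathbf{N}}^\epsilon\end{bmatrix}$ via the left factorization. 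This disposes of \eqref{eq:afin6} and \eqref{eq:afin7} with no mode elimination at all, which is arguably cleaner than the paper's reduction; only \eqref{eq:afin5} still requires a state-space argument in your plan, and the observer-error change of coordinates you describe is the standard way to carry it out. One remark: you could avoid that last computation as well by pushing the TFM algebra one step further. The Bezout relations packaged in \eqref{eq:bez} (read as the product in the reverse order) give $\mathbf{X}^\epsilon\widetilde{\mathbf{M}}^\epsilon=\mathbf{M}^\epsilon\widetilde{\mathbf{X}}^\epsilon$ and $I+\mathbf{X}^\epsilon\widetilde{\mathbf{N}}^\epsilon=\mathbf{M}^\epsilon\widetilde{\mathbf{Y}}^\epsilon$, whence $\mathbf{T}^\epsilon_1=\epsilon\widehat{\mathbf{M}}^{-1}\mathbf{M}^\epsilon\begin{bmatrix}\widetilde{\mathbf{X}}^\epsilon&-\widetilde{\mathbf{Y}}^\epsilon\end{bmatrix}=\epsilon\mathbf{G}_0\begin{bmatrix}\widetilde{\mathbf{X}}^\epsilon&-\widetilde{\mathbf{Y}}^\epsilon\end{bmatrix}$, and the series realization of this product is exactly \eqref{eq:afin5}.
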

	\begin{proof}
		See the Appendix.
	\end{proof}

	\section{Convex procedure for augmented sparsity}\label{sec:algo}


	
	\subsection{Procedure setup and norm condition reformulation}
	
	Recall that, in order to obtain sparse control laws of type \eqref{eq:NRF_explicit}, we aim to express controllers of type \eqref{eq:K_frac} for $\mathbf{Q}\in\mathcal{RH}_\infty^{m\times p}$ satisfying \eqref{eq:NRF_EMM}-\eqref{eq:fin_prop}. For robust stability, point $\mathbf{(b)}$ of Theorem \ref{thm:robstab} argues that we need only satisfy $\|\mathbf{T}^\epsilon_1+\mathbf{T}^\epsilon_2\mathbf{Q}\mathbf{T}^\epsilon_3\|_\infty\leq 1$, where $\mathbf{T}^\epsilon_1$, $\mathbf{T}^\epsilon_2$ and $\mathbf{T}^\epsilon_3$ are expressed as in \eqref{eq:afin5}-\eqref{eq:afin7}.
	
	The beginning of this section is dedicated to showing how this norm condition can be converted into \eqref{eq:prob}. Due to this being the setup of the iterative algorithm given in the sequel, this conversion will be given in an ordered sequence of steps:
	
	\emph{Step 1.} Solve \eqref{eq:NRF_EMM} for $\mathbf{Q}_0\in\mathcal{RH}_\infty^{m\times p}$ and obtain a basis $\mathbf{B}\in\mathcal{RH}_\infty^{mp\times q}$ for $\text{Ker}\small\begin{bmatrix}
		F_{\mathcal{X}}(\widetilde{\mathbf{M}}^\top\otimes I)\\F_{\widehat{\mathcal{Y}}}(\widetilde{\mathbf{N}}^\top\otimes I)
	\end{bmatrix}$ (recall Remark \ref{rem:reuse_QI_orig});
	

	\emph{Step 2.} Partition $\mathbf{B}$ via its columns, as follows
	$${\mathbf{B}}:=\left[\begin{array}{c:c:c:c:c}
	\mathbf{B}_1&\dots&\mathbf{B}_i&\dots&\mathbf{B}_q
	\end{array}\right],\ \mathbf{B}_i\in\mathcal{RH}_\infty^{mp\times 1},$$
	to obtain minimal realizations $\mathbf{B}_i=\left[\small\begin{array}{c|c}
	A^\mathbf{B}_i&B^\mathbf{B}_i\\\hline C^\mathbf{B}_i&D^\mathbf{B}_i
	\end{array}\right]$,$\forall i\in1:q$;

	\emph{Step 3.} Using these realizations, write via \eqref{eq:dublu2} a stable RCF of each $\mathbf{B}_i=\mathbf{N}_{\mathbf{B}_i}\mathbf{M}_{\mathbf{B}_i}^{-1}$, which are given explicitly by
	\begin{equation}\label{eq:B_fac}
	\begin{bmatrix}
	\mathbf{M}_{\mathbf{B}_i}\\\mathbf{N}_{\mathbf{B}_i}
	\end{bmatrix}:=\left[\small\begin{array}{c|c}
	A^\mathbf{B}_i+B^\mathbf{B}_iF^\mathbf{B}_i&B^\mathbf{B}_i\\\hline
	F^\mathbf{B}_i&1\\
	C^\mathbf{B}_i+D^\mathbf{B}_iF^\mathbf{B}_i&D^\mathbf{B}_i
	\end{array}\right]\in\mathcal{RH}_\infty^{(mp+1)\times 1},
	\end{equation}
	with $F^\mathbf{B}_i$ ensuring $\Lambda(A^\mathbf{B}_i+B^\mathbf{B}_iF^\mathbf{B}_i-sI)\subset\mathbb{C}^-$ to form
	\begin{equation}\label{eq:B_hat}
		\widehat{\mathbf{B}}:=
		\left[\begin{array}{c:c:c:c:c}
			\mathbf{N}_{\mathbf{B}_1}&\cdots&\mathbf{N}_{\mathbf{B}_i}&\cdots&\mathbf{N}_{\mathbf{B}_q}
		\end{array}\right]\in\mathcal{RH}_\infty^{mp\times q};
	\end{equation}
	
	\emph{Step 4.} Partition $\widehat{\mathbf{B}}=\left[\begin{array}{c:c:c:c:c}
	\widehat{\mathbf{B}}_1^\top&\cdots&\widehat{\mathbf{B}}_i^\top&\cdots&\widehat{\mathbf{B}}_p^\top
	\end{array}\right]^\top,$ noting that $\widehat{\mathbf{B}}_i\in\mathcal{RH}_\infty^{m\times q},$
	in order to finally define
	\begin{equation}\label{eq:B_bar}
	\overline{\mathbf{B}}:=\left[\begin{array}{c:c:c}
	\widehat{\mathbf{B}}_1&\cdots&\widehat{\mathbf{B}}_p
	\end{array}\right]\hspace{-1mm}=\hspace{-1mm}\left[\small\begin{array}{c|c}
	A_{\overline{\mathbf{B}}}&B_{\overline{\mathbf{B}}}\\\hline
	C_{\overline{\mathbf{B}}}&D_{\overline{\mathbf{B}}}
	\end{array}\right]\in\mathcal{RH}_\infty^{m\times pq}.
	\end{equation}

	\begin{remark}\label{rem:bas_alt}
		Since $\mathbf{B}_i$ are the columns of stable basis of the null space in \eqref{eq:null_NRF}, then so are $\mathbf{N}_{\mathbf{B}_i}=\mathbf{B}_i\mathbf{M}_{\mathbf{B}_i}$, having realizations of the same order as those of $\mathbf{B}_i$. Thus, $\widehat{\mathbf{B}}$ is a stable basis for the same null space and may also be used to form $\text{vec}(\widehat{\mathbf{Q}})=\widehat{\mathbf{B}}\mathbf{x},\ \forall\mathbf{x}\in\mathcal{RH}_\infty^{q\times 1}$, as in Proposition \ref{prop:NRF}.
	\end{remark}

	This concludes the setup of our procedure and we now move on to converting $\|\mathbf{T}^\epsilon_1+\mathbf{T}^\epsilon_2\mathbf{Q}\mathbf{T}^\epsilon_3\|_\infty\leq 1$ into \eqref{eq:prob}, through the explicit use of $\overline{\mathbf{B}}$. Recall that $\mathbf{Q}$ can be partitioned additively as $\mathbf{Q}=\mathbf{Q}_0+\widehat{\mathbf{Q}}\,$, with $\mathbf{Q}_0$ having been obtained in \textit{Step 1} of the setup and with $\text{vec}(\widehat{\mathbf{Q}})$ formed as in Remark \ref{rem:bas_alt}. Thus, by \eqref{eq:B_bar} in \textit{Step 4} of the setup, it is straightforward to obtain\vspace{-1mm}
	$$
	\widehat{\mathbf{Q}}=\left[\begin{array}{c:c:c}
	\widehat{\mathbf{B}}_1\mathbf{x}&\cdots&\widehat{\mathbf{B}}_p\mathbf{x}
	\end{array}\right]=\overline{\mathbf{B}}\mathcal{D}_p(\mathbf{x})\in\mathcal{RH}_\infty^{m\times p}.\vspace{-1mm}
	$$
	Defining $\mathbf{x}\in\mathcal{RH}_\infty^{q\times 1}$ as $\mathbf{x}:=\left[\small\begin{array}{c|c}
	A_x&b_x\\\hline C_x&d_x
	\end{array}\right]$,
	we may express
	\begin{equation}\label{eq:Q_hat_ss}
	\widehat{\mathbf{Q}}=\left[\small\begin{array}{cc|c}
	A_{\overline{\mathbf{B}}}&B_{\overline{\mathbf{B}}}\mathcal{D}_p(C_x)&B_{\overline{\mathbf{B}}}\mathcal{D}_p(d_x)\\
	0&\mathcal{D}_p(A_x)&\mathcal{D}_p(b_x)\\\hline
	C_{\overline{\mathbf{B}}}&D_{\overline{\mathbf{B}}}\mathcal{D}_p(C_x)&D_{\overline{\mathbf{B}}}\mathcal{D}_p(d_x)
	\end{array}\right],\normalsize
	\end{equation}
	whose realization is affine in terms of all variable matrices: $A_x$, $b_x$, $C_x$, $d_x$, and $A_{\overline{\mathbf{B}}}$ and $C_{\overline{\mathbf{B}}}$, by way of $F_i^\mathbf{B}$, for $i\in1:q$.

	It now becomes clear, in terms of \eqref{eq:prob}, that we have $\widehat{\mathbf{T}}_1=\mathbf{T}^\epsilon_1+\mathbf{T}^\epsilon_2\mathbf{Q}_0\mathbf{T}^\epsilon_3$ and $\widehat{\mathbf{T}}_{2i}=\mathbf{T}^\epsilon_2\widehat{\mathbf{Q}}_i\mathbf{T}^\epsilon_3$, where we have defined
	\begin{equation}\label{eq:Q_hat_i}
		\widehat{\mathbf{Q}}_i:=\overline{\mathbf{B}}\left[\begin{array}{c:c:c:c:c}
			\widehat e_i&\cdots&\widehat e_{jq+i}&\cdots&\widehat e_{(p-1)q+i}
		\end{array}\right],
	\end{equation}
	with $i\in 1:q,\ j\in 1:p-2$, and $\widehat e_i$ being the $i^\text{th}$ vector in the canonical basis of $\mathbb{R}^{pq\times 1}$. Moving on, the next section tackles the numerical details of satisfying the inequality from \eqref{eq:prob}.
	
	\begin{remark}\label{rem:reduce}
		The free term of the Youla Parametrization is now expressed as $\mathbf{Q}=\mathbf{Q}_0+\sum_{i=1}^{q}\mathbf{x}_i\widehat{\mathbf{Q}}_i$, for some $\mathbf{x}_i\in\mathcal{RH}_\infty^{1\times 1},\ \forall i\in1:q$. Thus, forming $\widehat{\mathbf{B}}$ from only a subset of the $q$ columns used in \eqref{eq:B_hat} may prove sufficient to solve \eqref{eq:prob}, which has the benefit of cutting down on computational costs.\vspace{-1mm}
	\end{remark}
	
	\subsection{Numerical formulation and NRF implementability}
	
	In order to formulate a numerical procedure meant to solve \eqref{eq:prob}, we first require a state-space realization of $\mathbf{T}^\epsilon_1+\mathbf{T}^\epsilon_2\mathbf{Q}\mathbf{T}^\epsilon_3$. This can be obtained by first defining the following TFM\vspace{-1mm}
	\begin{equation}\label{eq:Tf}
		\mathbf{T}^f\hspace{-1mm}:=\hspace{-1mm}\small\left[\begin{array}{c:c}
			\mathbf{T}^\epsilon_1+\mathbf{T}^\epsilon_2\mathbf{Q}_0\mathbf{T}^\epsilon_3&\mathbf{T}^\epsilon_2\\\hdashline
			\mathbf{T}^\epsilon_3&0
		\end{array}\right]\hspace{-1mm}=\hspace{-1mm}\left[\small\begin{array}{c|c:c}
			A^f&B^f_1&B^f_2\\\hline
			C^f_1&D^f_{11}&D^f_{12}\\\hdashline
			C^f_2&D^f_{21}&0
		\end{array}\right]\hspace{-1mm},\normalsize\vspace{-1mm}
	\end{equation}
	and obtaining a minimal state-space realization as in \eqref{eq:Tf}, with $\Lambda\left(A^f-sI\right)\subset\mathbb{C}^-$ due to $\mathbf{T}^f\in\mathcal{RH}_\infty$, via one of $\mathbf{Q}_0$ and via \eqref{eq:afin5}-\eqref{eq:afin7}, as per Remark \ref{rem:dss2ss}. Notice that 	
	$\mathbf{T}^\epsilon_1+\mathbf{T}^\epsilon_2\mathbf{Q}\mathbf{T}^\epsilon_3=\mathcal{F}_\ell(\mathbf{T}^f,\widehat{\mathbf{Q}})$ to get,
	via \eqref{eq:Q_hat_ss} along with the formulas in Section 10.4 of \cite{zhou}, the realization from \eqref{eq:H_CL_ss}, given on the next page. Crucially, notice that all the variable matrices which appear in the realization from \eqref{eq:H_CL_ss} do so only via \emph{affine terms}. 

	Before stating the numerical problem which will be tackled by our iterative procedure, we must ensure that the obtained controller is well-defined and can be implemented as in \eqref{eq:NRF_explicit}, via its NRF pair. As indicated in Remark \ref{rem:prop}, this is ensured by satisfying \eqref{eq:BMI1}-\eqref{eq:BMI2}, which can be written generically as\stepcounter{equation}
	\begin{equation}\label{eq:ineq_prop}
		\left(Z_1^k+Z_2^k\mathbf{Q}(\infty)Z_3^k\right)\hspace{-2mm}\phantom{.}^{\top}\hspace{-1mm}\left(Z_1^k+Z_2^k\mathbf{Q}(\infty)Z_3^k\right)\succ 0,\ \forall k\in 1:N_Z,
	\end{equation}
	where the various matrices $Z_1^k\in\mathbb{R}^{w_k\times w_k}$, $Z_2^k\in\mathbb{R}^{w_k\times m}$ and $Z_3^k\in\mathbb{R}^{p\times w_k}$ are shown explicitly in \eqref{eq:BMI1}-\eqref{eq:BMI2}. Finally, note that	$
	\mathbf{Q}(\infty)=\mathbf{Q}_0(\infty)+\sum_{i=1}^qd_{xi}\widehat{\mathbf{Q}}_i(\infty),
	$
	where we partition
	$
	\mathbf{x}(\infty)=d_x=\left[\begin{array}{c:c:c:c:c}
	d_{x1}&\cdots&d_{xi}&\cdots&d_{xq}
	\end{array}\right]^\top,\ i\in 2:q-1.
	$
	Thus, we combine \eqref{eq:prob} and \eqref{eq:ineq_prop} into our numerical problem
	\begin{equation}\label{eq:prob_x}\footnotesize
	\left\{\hspace{-1mm}\begin{array}{l}
	\left\|\mathcal{F}_\ell(\mathbf{T}^f,\widehat{\mathbf{Q}})\right\|_\infty<1,\ \widehat{\mathbf{Q}}\text{ as in }\eqref{eq:Q_hat_ss},\\[2mm]
	(\widehat{Z}_1^k)^\top\widehat{Z}_1^k+\hspace{-1mm}\sum\limits_{i=1}^{q}d_{xi}\big((\widehat{Z}_1^k)^\top\widehat{Z}_{2i}^k+(\widehat{Z}_{2i}^k)^\top\widehat{Z}_1^k\big)+\hspace{-1mm}\sum\limits_{i=1}^{q}d_{xi}^2(\widehat{Z}_{2i}^k)^\top\widehat{Z}_{2i}^k+\\+\sum\limits_{i=1}^{q-1}\sum\limits_{j=i+1}^{q}d_{xi}d_{xj}\big((\widehat{Z}_{2i}^k)^\top\widehat{Z}_{2j}^k+(\widehat{Z}_{2j}^k)^\top\widehat{Z}_{2i}^k\big)\succ 0,\ \forall k\in1:N_Z,\\
	\widehat{Z}_1^k:=Z_1^k+Z_2^k\mathbf{Q}_0(\infty)Z_3^k,\widehat{Z}_{2i}^k:=Z_2^k\widehat{\mathbf{Q}}_i(\infty)Z_3^k,\ \forall i\in1:q.
	\end{array}\right.\normalsize\hspace{-2mm}
	\end{equation}\newpage

	\begin{algorithm}\small
		\textbf{Initialization:} Solve the LMI system of \eqref{eq:prob_iter}, given on the next page, along with the equality constraint $\overline{d}_x-d_x=0$, for $\Big( A_{x}^0,b_{x}^0,C_{x}^0,d_{x}^0,\overline{d}_x^0,\left(F^\mathbf{B}_i\right)^{0},P^0,\overline{P}^0,\left(\overline{P}^D\right)^{0},P_x^0,$ $\overline{P}_x^0,\left(P^\mathbf{B}_i\right)^{0},\left(\overline{P}^\mathbf{B}_i\right)^{0}\Big)$. Using these computed variables, form $T_A^0$, $T_B^0$, $T_C^0$ as in \eqref{eq:aux4}-\eqref{eq:aux6} and then set $k=0$ along with $f^{0}=\footnotesize\left\|T_C^{0}-T_A^{0}T_B^{0}\right\|_*+\left\|I_{n_T}\right\|_*$\;
		
		\Repeat{$f^{k-1}-f^{k}<\eta_1$ or $f^k-\left\|I_{n_T}\right\|_*<\eta_2$}{
			\eIf{$k\text{ mod }2<1$}{
				Set $k=k+1$ followed by $\Theta^k=T_B-T_B^{k-1}$\;
			}{Set $k=k+1$ followed by $\Theta^k=T_A-T_A^{k-1}$\;}
			
			Solve $\mathcal{M}\left(T_A^{k-1},T_B^{k-1},\Theta^k\right)$ for $\Big(\hspace{-0.5mm} A_{x}^{k},b_{x}^{k},C_{x}^{k},d_{x}^{k},\overline{d}_x^{k},$ $(F^\mathbf{B}_i)^{k},P^{k},\overline{P}^{k},\left(\overline{P}^D\right)^{\hspace{-0.5mm}k}\hspace{-1mm},P_x^{k}, \overline{P}_x^{k},$ $(P^\mathbf{B}_i)^{k},\left(\overline{P}^\mathbf{B}_i\right)^{\hspace{-0.5mm}k}\Big)$ and use them to form $T_A^{k}$, $T_B^{k}$, $T_C^{k}$ as in \eqref{eq:aux4}-\eqref{eq:aux6}\;
			
			Compute $f^{k}:=\left\|
			T_C^{k}-T_A^{k}T_B^{k}\right\|_*+\left\|I_{n_T}\right\|_*$\;
			
		}\normalsize\vspace{2mm}\caption{Convex approach to solving \eqref{eq:prob_x}}\label{alg:iter}
	\end{algorithm}

	\phantom{}\vspace{-7mm}

	\subsection{The iterative procedure with guaranteed convergence}
	
	We now introduce the most general form (recall Remark \ref{rem:reduce}) of our convex and iterative procedure for solving \eqref{eq:prob_x}, based upon the algorithm with guaranteed convergence in 
	\cite{BMI2LMI}.

	\begin{figure*}[b]
		\hrulefill
		\begin{equation}\label{eq:H_CL_ss}
			\mathbf{T}^\epsilon_1+\mathbf{T}^\epsilon_2\mathbf{Q}\mathbf{T}^\epsilon_3=\left[\begin{array}{c|c}
				\overline{A}&\overline{B}\\\hline
				\overline{C}&\overline{D}
			\end{array}\right]=\left[\footnotesize\begin{array}{ccc|c}
				A^f+B_2^fD_{\overline{\mathbf{B}}}\mathcal{D}_p(d_x)C_2^f&B_2^fC_{\overline{\mathbf{B}}}&B_2^fD_{\overline{\mathbf{B}}}\mathcal{D}_p(C_x)&B_1^f+B_2^fD_{\overline{\mathbf{B}}}\mathcal{D}_p(d_x)D_{21}^f\\
				B_{\overline{\mathbf{B}}}\mathcal{D}_p(d_x)C_2^f&A_{\overline{\mathbf{B}}}&B_{\overline{\mathbf{B}}}\mathcal{D}_p(C_x)&B_{\overline{\mathbf{B}}}\mathcal{D}_p(d_x)D_{21}^f\\
				\mathcal{D}_p(b_x)C_2^f&0&\mathcal{D}_p(A_x)&\mathcal{D}_p(b_x)D_{21}^f\\\hline
				C_1^f+D_{12}^fD_{\overline{\mathbf{B}}}\mathcal{D}_p(d_x)C_2^f&D_{12}^fC_{\overline{\mathbf{B}}}&D_{12}^fD_{\overline{\mathbf{B}}}\mathcal{D}_p(C_x)&D_{11}^f+D_{12}^fD_{\overline{\mathbf{B}}}\mathcal{D}_p(d_x)D_{21}^f
			\end{array}\right]\normalsize\tag{26}
		\end{equation}
		\hrulefill
		\begin{equation}\label{eq:prob_iter}
		\mathcal{M}\left(X,Y,\Theta\right):=\footnotesize\left\{\begin{array}{l}
		\min\limits_{A_{x},b_{x},C_{x},d_{x},\overline{d}_x,F^\mathbf{B}_i,P,\overline{P},\overline{P}^D,P_x,\overline{P}_x,P^\mathbf{B}_i,\overline{P}^\mathbf{B}_i}\left\|\begin{bmatrix}
		T_C+XY-T_AY-XT_B&T_A-X\\T_B-Y&I_{n_T}
		\end{bmatrix}\right\|_*,\\
		\text{s.t.}\left\{
		\begin{array}{l}
		(\widehat{Z}_1^k)^\top\widehat{Z}_1^k+\sum_{i=1}^{q}d_{xi}\big((\widehat{Z}_1^k)^\top\widehat{Z}_{2i}^k+(\widehat{Z}_{2i}^k)^\top\widehat{Z}_1^k\big)+\sum_{i=1}^{q}\overline{P}^D_{ii}(\widehat{Z}_{2i}^k)^\top\widehat{Z}_{2i}^k+\\
		\qquad\qquad\qquad\qquad\qquad\qquad\qquad\qquad\quad+\sum_{i=1}^{q-1}\sum_{j=i+1}^{q}\overline{P}^D_{ij}\big((\widehat{Z}_{2i}^k)^\top\widehat{Z}_{2j}^k+(\widehat{Z}_{2j}^k)^\top\widehat{Z}_{2i}^k\big)\succ 0,\ \forall k\in1:N_Z,\\
		\overline{P}^D = \big(\overline{P}^D\big)^\top,\ P=P^{\top}\succ 0,\ -G\succ 0,\ P_x=P_x^\top\succ 0,\ -2\text{sym}(\overline{P}_x)\succ 0,\\ P_i^\mathbf{B}=(P_i^\mathbf{B})^{\top}\succ 0,\ \forall i\in 1:q,\ -2\text{sym}\big(P_i^\mathbf{B}\left(A_i^\mathbf{B}\right)^\top+\overline{P}_i^\mathbf{B}\left(B_i^\mathbf{B}\right)^\top\big)\succ 0,\ \forall i\in 1:q,\ \Theta=0.\\
		\end{array}
		\right.\end{array}\right.\hspace{-3mm}
		\normalsize\tag{30}
		\end{equation}\vspace{-3mm}
		\hrulefill
	\end{figure*}

	\begin{theorem}\label{thm:algo}
		
		Given the realization from \eqref{eq:H_CL_ss} along with two tolerance values $0<\eta_1,0<\eta_2\ll1$, define the following:
		\begin{subequations}
			\begin{equation}\label{eq:aux1}
			\small\begin{array}{ll}
			\overline{A}^f:=\begin{bmatrix}
			A^f&0&0\\0&0&0\\0&0&0
			\end{bmatrix},&\overline{C}_2^f:=\begin{bmatrix}
			C_2^f&0&0\\
			0&I&0\\
			0&0&I
			\end{bmatrix},\\
			\overline{B}^f_{1\phantom{2}}\hspace{-1mm}:=\begin{bmatrix}\left(B_{1\phantom{2}}^f\right)\hspace{-2mm}\phantom{.}^\top&\hspace{-2mm}0&\hspace{-2mm}0\end{bmatrix}^\top\hspace{-1mm},&	\overline{D}_{21}^f\hspace{-1mm}:=\begin{bmatrix}
			\left(D_{21}^f\right)\hspace{-2mm}\phantom{.}^\top&\hspace{-2mm}0&\hspace{-2mm}0
			\end{bmatrix}^\top\hspace{-1mm},
			\end{array}\normalsize
			\end{equation}\vspace{-1mm}
			\begin{equation}\label{eq:aux2}
			G:=\footnotesize\begin{bmatrix}
			2\text{sym}\left(P\overline{A}^f+\overline{P}\overline{C}_2^f\right) &
			P\overline{B}_1^f+\overline{P}\overline{D}_{21}^f & \overline{C}^{\top}\\
			\left(\overline{B}_1^f\right)^{\top}P+\left(\overline{P}\overline{D}_{21}^f\right)^\top&- I& {\overline{D}}^{\top}\\
			\overline{C}&\overline{D}&- I
			\end{bmatrix},\normalsize
			\end{equation}\vspace{-1mm}
			\begin{equation}\label{eq:aux3}
			A_S:=\small\begin{bmatrix}
			B_2^fD_{\overline{\mathbf{B}}}\mathcal{D}_p(d_x)&B_2^fC_{\overline{\mathbf{B}}}&B_2^fD_{\overline{\mathbf{B}}}\mathcal{D}_p(C_x)\\
			B_{\overline{\mathbf{B}}}\mathcal{D}_p(d_x)&A_{\overline{\mathbf{B}}}&B_{\overline{\mathbf{B}}}\mathcal{D}_p(C_x)\\
			\mathcal{D}_p(b_x)&0&\mathcal{D}_p(A_x)
			\end{bmatrix},\normalsize\vspace{-1mm}
			\end{equation}
			\begin{equation}
			\label{eq:aux4}
			\hspace{0mm}T_A:=\mathcal{D}\left(P_1^\mathbf{B},\dots,P_q^\mathbf{B},P_x,\overline{d}_x,P,0\right)\in\mathbb{R}^{p_T\times n_T},\normalsize\vspace{-1mm}
			\end{equation}
			\begin{equation}
			\label{eq:aux5}
			T_B:=\mathcal{D}\left(\left(F_1^\mathbf{B}\right)^\top,\dots,\left(F_q^\mathbf{B}\right)^\top,A_x^\top,d_x^\top,A_S,0\right)\normalsize\in\mathbb{R}^{n_T\times m_T},\vspace{-1mm}
			\end{equation}
			\begin{equation}
			\label{eq:aux6}
			\hspace{-4mm}T_C:=\mathcal{D}\left(\overline{P}_1^\mathbf{B},\dots,\overline{P}_q^\mathbf{B},\overline{P}_x,\overline{P}^D,\overline{P},\overline{d}_x-d_x\right)\normalsize\in\mathbb{R}^{p_T\times m_T}.\vspace{-1mm}
			\end{equation}
		\end{subequations}
		Then, we have that:
		\begin{enumerate}
			\item[$\mathbf{(a)}$] If the problem from \eqref{eq:prob_x} is feasible, then a solution can be found by the iterative procedure with guaranteed convergence from Algorithm \ref{alg:iter} which involves the convex optimization problem 
			from \eqref{eq:prob_iter}, on the next page;
			
			\item[$\mathbf{(b)}$] If, at the proposed iteration's termination, we have that $\left\|T_C^{k}-T_A^{k}T_B^{k}\right\|_*<\eta_2$, then $A_{x}^{k}$, $b_{x}^{k}$, $C_{x}^{k}$, $d_{x}^{k}$ and $(F_i^\mathbf{B})^{k}$ can be used to form $\widehat{\mathbf{Q}}$ as in \eqref{eq:B_fac}-\eqref{eq:Q_hat_ss}.
		\end{enumerate}
	\end{theorem}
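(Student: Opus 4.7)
My plan is to reduce \eqref{eq:prob_x} to a problem of the exact form that the procedure in \cite{BMI2LMI} is designed to tackle, namely one combining linear matrix inequalities with a single bilinear equality constraint $T_C=T_AT_B$, and then invoke the convergence guarantees of that procedure.

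\textbf{Step 1: Bounded real lemma and slack variables.} I would first apply the bounded real lemma to the realization in \eqref{eq:H_CL_ss}, yielding the BMI $P\succ 0$ together with the matrix inequality obtained after a Schur complement on the $\mathcal{H}_\infty$ block. The key observation is that the state matrix admits the decomposition $\overline{A}=\overline{A}^f+A_S\,\overline{C}_2^f$, and analogously $\overline{B}=\overline{B}_1^f+A_S\,\overline{D}_{21}^f$, with $\overline{A}^f,\overline{B}_1^f,\overline{C}_2^f,\overline{D}_{21}^f$ as in \eqref{eq:aux1} and $A_S$ as in \eqref{eq:aux3}. Thus $P\overline{A}=P\overline{A}^f+(PA_S)\overline{C}_2^f$, and the only non-affine term is the product $PA_S$. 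Introducing the slack variable $\overline{P}$ to represent $PA_S$, the $\mathcal{H}_\infty$ condition becomes the LMI $G\prec 0$ with $G$ as in \eqref{eq:aux2}, provided one enforces the equality $\overline{P}=PA_S$. The same trick is applied to the positivity constraint \eqref{eq:ineq_prop}: the quadratic term $d_{xi}d_{xj}$ is replaced by a new scalar $\overline{P}^D_{ij}$ and $d_x$ on the right of the cross-terms is replaced by a new $\overline{d}_x$, under the equalities $\overline{P}^D=d_x\overline{d}_x^{\top}$ and $d_x=\overline{d}_x$.

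\textbf{Step 2: Enforcing stability of the auxiliary realizations.} The construction in \eqref{eq:B_fac}–\eqref{eq:Q_hat_ss} requires each $\mathbf{B}_i$-based RCF to be stable, which translates into $\Lambda(A^\mathbf{B}_i+B^\mathbf{B}_iF^\mathbf{B}_i-sI)\subset\mathbb{C}^-$, and the overall $\widehat{\mathbf{Q}}$ in \eqref{eq:Q_hat_ss} is stable iff $\mathcal{D}_p(A_x)$ is stable (the $A_{\overline{\mathbf{B}}}$ block is already handled by the previous item). Each of these reduces via the Lyapunov inequality to a BMI: $P^\mathbf{B}_i\succ 0$ with $-2\,\text{sym}(P^\mathbf{B}_i(A^\mathbf{B}_i)^\top+\overline{P}^\mathbf{B}_i(B^\mathbf{B}_i)^\top)\succ 0$ under $\overline{P}^\mathbf{B}_i=P^\mathbf{B}_i(F^\mathbf{B}_i)^{\top}$, and $P_x\succ 0$ with $-2\,\text{sym}(\overline{P}_x)\succ 0$ under $\overline{P}_x=P_xA_x^\top$. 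All the LMI parts collected from Steps~1 and~2 are precisely those listed in \eqref{eq:prob_iter}.

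\textbf{Step 3: Packing every bilinear equality into $T_C=T_AT_B$.} Now I would observe that every equality constraint introduced above has the shape (matrix factor)$\cdot$(matrix factor)$=$(slack). Stacking them block-diagonally produces
$T_A T_B = \mathcal{D}\!\left(P_1^\mathbf{B}(F_1^\mathbf{B})^{\top},\dots,P_q^\mathbf{B}(F_q^\mathbf{B})^{\top},P_xA_x^\top,\overline{d}_x d_x^\top,PA_S,0\right)$, which by \eqref{eq:aux4}–\eqref{eq:aux6} coincides with $T_C$ up to the trailing block whose equality $\overline{d}_x-d_x=0$ is already enforced linearly. Therefore the entire nonconvex portion of \eqref{eq:prob_x} is equivalent to $T_C=T_A T_B$ alongside the LMIs in \eqref{eq:prob_iter}.

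\textbf{Step 4: Invoking the procedure from \cite{BMI2LMI} and concluding.} Using the identity
\[
\begin{bmatrix} T_C+XY-T_AY-XT_B & T_A-X\\ T_B-Y & I_{n_T} \end{bmatrix}
= \begin{bmatrix} T_C-T_AT_B & 0\\ 0 & I_{n_T}\end{bmatrix}+ \begin{bmatrix} T_A-X\\0\end{bmatrix}\!\begin{bmatrix} 0 & I_{n_T}\end{bmatrix}+\cdots,
\]
the nuclear norm of the left-hand block matrix is lower bounded by $\|I_{n_T}\|_*$, with equality iff $T_C=T_A T_B$. Freezing $(X,Y)$ at the previous iterate's $(T_A^{k-1},T_B^{k-1})$ and alternating which of the two is updated gives a convex LMI at every step; this is exactly $\mathcal{M}(T_A^{k-1},T_B^{k-1},\Theta^k)$ in \eqref{eq:prob_iter}, and the monotonic decrease and convergence of $f^k$ to $\|I_{n_T}\|_*$ under feasibility of \eqref{eq:prob_x} follows directly from the guarantee established in \cite{BMI2LMI}; this proves $\mathbf{(a)}$. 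For $\mathbf{(b)}$, if $\|T_C^k-T_A^kT_B^k\|_*<\eta_2$ on termination, then all slack variables coincide (up to tolerance) with the bilinear products they were meant to replace, so every LMI in \eqref{eq:prob_iter} certifies the corresponding BMI of \eqref{eq:prob_x}; in particular $A_x^k$, $b_x^k$, $C_x^k$, $d_x^k$ together with stabilizing $(F_i^\mathbf{B})^k$ yield a stable $\widehat{\mathbf{Q}}\in\mathcal{RH}_\infty^{m\times p}$ via \eqref{eq:B_fac}–\eqref{eq:Q_hat_ss} satisfying both the $\mathcal{H}_\infty$ bound and the well-posedness inequalities \eqref{eq:ineq_prop}. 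The delicate part is Step~3: correctly identifying $T_A,T_B,T_C$ so that a \emph{single} matrix equality captures all the bilinearities simultaneously; once this packaging is in place, the remainder is a direct application of \cite{BMI2LMI}.
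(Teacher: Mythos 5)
Your proposal is correct and follows essentially the same route as the paper's proof: Lyapunov/bounded-real characterizations made affine through the slack variables $\overline{P}$, $\overline{P}_x$, $\overline{P}_i^{\mathbf{B}}$, $\overline{P}^D$, $\overline{d}_x$, the packaging of all resulting bilinear equalities into the single block-diagonal constraint $T_AT_B=T_C$, and the nuclear-norm rank surrogate of \cite{BMI2LMI} with alternating frozen factors. The only details you gloss over are the paper's observation that stability of the individual blocks implies stability of $\overline{A}$ because $C_2^f\big(sI-A^f\big)^{-1}B_2^f\equiv 0$ (so the interconnection is effectively a cascade), and the precise invocation of Theorems 1--3 of \cite{BMI2LMI} (with the $\lambda\rightarrow\infty$ rescaling of the cost) that delivers the convergence guarantee; neither affects correctness.
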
\vspace{-3mm}

	\begin{figure*}[b]
		\centering
		\begin{tikzpicture}[scale=.75,every node/.style={transform shape}]
			\node(n1)at(0,0){};
			\node(nn1)[circle,draw,minimum height=8,right of=n1, node distance=8em]{};
			\node(n3)[rounded corners=3, minimum height=0.5cm, minimum width=0.5cm,draw, thick,right of=nn1, node distance=6em]{$ \begin{bmatrix}
					\mathbf{\Phi}_{\mathbf{K}}&\mathbf{\Gamma}_{\mathbf{K}}
				\end{bmatrix} $};
			\node(n4)[right of=n3, node distance=6em]{$\bullet$};
			\node(n5)[circle,draw,minimum height=8,right of=n4, node distance=6em]{};
			\node(n6)[right of=n5, node distance=4em]{};
			\node(n7)[rounded corners=3, minimum height=0.5cm, minimum width=0.5cm,draw, thick,right of=n6, node distance=4em]{$ \begin{bmatrix}
					\mathbf{\Phi}_{\mathbf{K}}&\mathbf{\Gamma}_{\mathbf{K}}
				\end{bmatrix} $};
			\node(n8)[right of=n7, node distance=6em]{$\bullet$};
			\node(n9)[circle,draw,minimum height=8,right of=n8, node distance=6em]{};
			\node(n10)[right of=n9, node distance=5em]{};
			
			\draw[-latex,line width=1.5pt](n1)--(nn1) node[midway, above]{\hspace{-1em}$\mathbf{u}_{((i-1)\text{ mod }\ell)+1}$};
			\draw[-latex,line width=1.5pt](n1)--(nn1) node[very near end, below]{$+$};
			\draw[-latex,line width=1.5pt](nn1)--(n3) ;
			\draw[-latex, line width=1.5pt](n3)--(n5) node[midway, above]{$\mathbf{u}_{(i\text{ mod }\ell)+1}$};
			\draw[-latex,line width=0pt](n4)--(n5) node[near end, below]{$+$};
			\draw[-latex,line width=1.5pt](n5)--(n7);
			\draw[-latex,line width=1.5pt](n7)--(n9) node[midway, above]{$\mathbf{u}_{((i+1)\text{ mod }\ell)+1}$};
			\draw[-latex,line width=0pt](n8)--(n9) node[near end, below]{$+$};
			\draw[-latex,line width=1.5pt](n9)--(n10);
			
			\node(nn21)[below of=nn1, node distance=3em]{$\mathbf{b}_{((i-1)\text{ mod }\ell)+1}$};
			\node(n25)[below of=n5, node distance=3em]{$\mathbf{b}_{(i\text{ mod }\ell)+1}$};
			\node(n29)[below of=n9, node distance=3em]{$\mathbf{b}_{((i+1)\text{ mod }\ell)+1}$};
			
			\draw[-latex,line width=1.5pt](nn21)--(nn1) node[ near end, right]{$+$};
			\draw[-latex,line width=1.5pt](n25)--(n5) node[ near end, right]{$+$};
			\draw[-latex,line width=1.5pt](n29)--(n9) node[ near end, right]{$+$};
			
			\node(n11)[below of=n1, node distance=5.5em]{};
			\node(nn11)[rounded corners=3, minimum height=0.5cm, minimum width=0.75cm,draw, thick,below of=nn1, node distance=5.5em]{$\overline{\mathbf{G}}_y$};
			\node(n24)[rounded corners=3, minimum height=0.5cm, minimum width=0.75cm,draw, thick,below of=n4, node distance=2.5em]{$\overline{\mathbf{G}}_u$};
			\node(n14)[circle,draw,minimum height=8,below of=n4, node distance=5.5em]{};
			\node(n15)[below of=n5, node distance=5.5em]{$\bullet$};
			\node(n28)[rounded corners=3, minimum height=0.5cm, minimum width=0.75cm,draw, thick,below of=n8, node distance=2.5em]{$\overline{\mathbf{G}}_u$};
			\node(n18)[circle,draw,minimum height=8, below of=n8, node distance=5.5em]{};
			\node(n19)[below of=n9, node distance=5.5em]{$\bullet$};
			\node(n20)[below of=n10, node distance=5.5em]{};
			
			\node(n26)[rounded corners=3, minimum height=0.5cm, minimum width=0.75cm,draw, thick,below of=n6, node distance=5.5em]{$\overline{\mathbf{G}}_y$};

			\draw[-latex,line width=1.5pt](n11)--(nn11) node[midway, above]{$\mathbf{y}_{((i-1)\text{ mod }\ell)+1}$};
			\draw[-latex,line width=1.5pt](nn11)--(n14)node[very near end, above]{$+$};
			\draw[-latex,line width=1.5pt](n14)--(n26) node[midway, above]{\hspace{10mm}$\mathbf{y}_{(i\text{ mod }\ell)+1}$};
			\draw[-latex,line width=1.5pt](n26)--(n18) node[very near end, above]{$+$};
			\draw[-latex,line width=1.5pt](n18)--(n20) node[midway, above]{\hspace{-0.75em}$\mathbf{y}_{((i+1)\text{ mod }\ell)+1}$};
			
			\draw[-latex,line width=1.5pt](20em,0)--(n24)--(n14)node[ near end, right]{$+$};
			\draw[-latex,line width=1.5pt](40em,0)--(n28)--(n18)node[ near end, right]{$+$};
			
			\draw[line width=1.5pt](26em,-5.5em)--(26em,-7.5em)--(14em,-7.5em)--(14em,-6em);
			\draw[line width=1.5pt] (13.95em,-6em) arc[start angle=-90, end angle=90,radius=0.5em];
			\draw[-latex, line width=1.5pt](14em,-5em)--(n3);
			
			\draw[line width=1.5pt](46em,-5.5em)--(46em,-7.5em)--(34em,-7.5em)--(34em,-6em);
			\draw[line width=1.5pt] (33.95em,-6em) arc[start angle=-90, end angle=90,radius=0.5em];
			\draw[-latex, line width=1.5pt](34em,-5em)--(n7);
			
		\end{tikzpicture}
		\caption{Interconnection between the network's subsystems and the distributed subcontrollers}\label{fig:connect}
	\end{figure*}

	\begin{proof}
		For point $\mathbf{(a)}$, see the Appendix. Point $\mathbf{(b)}$ follows directly from the fact that $\left\|T_C^{k}-T_A^{k}T_B^{k}\right\|_*<\eta_2\ll1$ indicates that the bilinear equality constraint belonging to the problem (given in the Appendix) that is equivalent to \eqref{eq:prob_x} has been satisfied for a feasible tuple, which designates a solution.
	\end{proof}\vspace{-2mm}

	\section{Numerical example}\label{sec:examp}
	
	\subsection{Design procedure}
	
	Consider a set of $\ell=20$ subsystems which are interconnected in a network with a ring topology, as depicted in Fig.~\ref{fig:connect}.
	The input-output model of each subsystem can be written as\vspace{-2mm}
	\begin{equation*}
	\mathbf{y}_{(i\text{ mod }\ell)+1}=\overline{\mathbf{G}}_y\mathbf{y}_{((i-1)\text{ mod }\ell)+1}+\overline{\mathbf{G}}_u\mathbf{u}_{(i\text{ mod }\ell)+1}, \forall i\in1:\ell,\vspace{-1mm}
	\end{equation*}
	with $\overline{\mathbf{G}}_y(s):=\tiny\left[\begin{array}{c|c}
	A_{y}-sE_{y}&B_{y}\\\hline C_{y}&0
	\end{array}\right]=\tiny\left[\begin{array}{rr|r}
	-1-s&0&1\\0&-1&1\\\hline1&1&0
	\end{array}\right]$ and $\overline{\mathbf{G}}_u(s):=\tiny\left[\begin{array}{c|c}
	A_{u}-sE_{u}&B_{u}\\\hline C_{u}&0
	\end{array}\right]=\tiny\left[\begin{array}{rr|r}
	1-s&0&11\\0&-1&4\\\hline1&1&0
	\end{array}\right]$. 
	Define now $\Xi:\mathbb{R}\rightarrow\mathbb{R}^{\ell\times \ell}$, $\Xi(\kappa):=\mathcal{D}_\ell(\kappa)\footnotesize\begin{bmatrix}
	O_{1,\ell-1}& 1\\I_{\ell-1}&O_{\ell-1,1}
	\end{bmatrix}$, to get that\stepcounter{equation}
	\begin{equation}\label{eq:net_orig}\normalsize
	\overline{\mathbf{G}}(s)=\tiny\left[\begin{array}{cc|c}
	\hspace{-2mm}\mathcal{D}_\ell(A_{y}\hspace{-1mm}-\hspace{-1mm}sE_{y})\hspace{-1mm}+\hspace{-1mm}\mathcal{D}_\ell(B_{y})\Xi(1)\mathcal{D}_\ell(C_{y})&\hspace{-2mm}\mathcal{D}_\ell(B_{y})\Xi(1)\mathcal{D}_\ell(C_{u})\hspace{-1mm}&O_{2\ell,\ell}\\O_{2\ell}&\mathcal{D}_\ell(A_{u}-sE_{u})&\mathcal{D}_\ell(B_{u})\\\hline \mathcal{D}_\ell(C_{y})&\mathcal{D}_\ell(C_{u})&O_\ell
	\end{array}\right]\normalsize
	\end{equation} 
	is the network's TFM, which is improper, having a strongly stabilizable and detectable realization and whose resulting descriptor vector is the concatenation of the descriptor vectors belonging to the realizations of all $\overline{\mathbf{G}}_y$ and $\overline{\mathbf{G}}_u$ subsystems.
	
	We aim to obtain a control law, for $\mathbf{\Phi}_{\mathbf{K}},\mathbf{\Gamma}_{\mathbf{K}}\in\mathcal{R}_p^{1\times 1}$, with\vspace{-2mm}
	\begin{equation}\label{eq:struc_des}
	\mathbf{u}_{(i\text{ mod }\ell)+1}=\mathbf{\Phi}_{\mathbf{K}}\mathbf{u}_{((i-1)\text{ mod }\ell)+1}+\mathbf{\Gamma}_{\mathbf{K}}\mathbf{y}_{(i\text{ mod }\ell)+1},\ \forall i\hspace{-0.25mm}\in\hspace{-0.25mm}1\hspace{-0.25mm}:\hspace{-0.25mm}\ell.\vspace{-1mm}
	\end{equation}
	Then, approximate $\overline{\mathbf{G}}(s)$ with $\mathbf{G}(s):=\mathcal{D}_\ell(\mathbf{\Psi})\Omega$, where $\Omega:=\Xi(2)+I_\ell$ and 
	$
	\mathbf{\Psi}(s):=\left[\footnotesize\begin{array}{c|c}
	A_\mathbf{\Psi}-sE_\mathbf{\Psi}&B_\mathbf{\Psi}\\\hline
	C_\mathbf{\Psi}&D_\mathbf{\Psi}
	\end{array}\right]=\tiny\left[\begin{array}{rr|r}
	1&-s&0\\0&1&-1\\\hline1&0&4
	\end{array}\right]\hspace{-1mm}.
	$
	Note that the latter realization is strongly stabilizable and detectable, such that $F_\mathbf{\Psi}=\begin{bmatrix}
	1&5
	\end{bmatrix}$ and $H_\mathbf{\Psi}^\top=\begin{bmatrix}
	-5&-1
	\end{bmatrix}$ are admissible feedbacks for it. Thus, we are able to express
	\begin{equation}\label{eq:net_alt}
	\mathbf{G}(s)=\left[\footnotesize\begin{array}{c|c}
	\mathcal{D}_\ell(A_\mathbf{\Psi}-sE_\mathbf{\Psi})&\mathcal{D}_\ell(B_\mathbf{\Psi})\Omega\\\hline\mathcal{D}_\ell(C_\mathbf{\Psi})&\mathcal{D}_\ell(D_\mathbf{\Psi})\Omega
	\end{array}\right],\normalsize
	\end{equation}
	with $F=\Omega^{-1}\mathcal{D}_\ell(F_\mathbf{\Psi})$ and $H=\mathcal{D}_\ell(H_\mathbf{\Psi})$ being admissible feedbacks. Obtaining stable NRCFs for \eqref{eq:net_orig} and \eqref{eq:net_alt} via Lemma \ref{lem:NRCF}, we use them to get the maximum stability radius $b_{opt}> 0.9925$ of $\mathbf{G}$ (see \cite{gap}) and to compute an upper bound for some $\overline{\mathbf{Q}}\in\mathcal{RH}_\infty^{\ell\times\ell}$ (see Chapter 8 of \cite{BF}) denoted $\mu(\overline{\mathbf{Q}})< 0.5609$ of the directed gap metric between $\mathbf{G}$ and $\overline{\mathbf{G}}$, as given in $(4)$ from \cite{gap}. Then, we set $\epsilon=0.7>\mu(\overline{\mathbf{Q}})$ and we get, by the same arguments as in the proof of Lemma 2 from \cite{gap} applied for $\overline{\mathbf{Q}}\in\mathcal{RH}_\infty^{\ell\times\ell}$, that $\overline{\mathbf{G}}\in \mathcal{C}_{\mathbf{G}}^{\epsilon}$. Note that $\mu(\overline{\mathbf{Q}})<1$ implies $\det\overline{\mathbf{Q}}\not\equiv0$. Otherwise, $\exists\ \mathbf{v}\in\text{Ker}\,\overline{\mathbf{Q}}\cap\mathcal{RH}_\infty^{\ell\times1}$ with $\mathcal{H}_2$ norm equal to $1$ which can be used to obtain that $\mu(\overline{\mathbf{Q}})\geq1$.
	
	Use now the realization from \eqref{eq:net_alt} and $F$ to compute a stable RCF as in \eqref{eq:real_r}. With this stable RCF and $H$, employ Proposition \ref{prop:size_reduce} to compute $(\mathbf{N}^{\epsilon},\widetilde{\mathbf{N}}^{\epsilon},\mathbf{M}^{\epsilon},\widetilde{\mathbf{M}}^{\epsilon},\mathbf{X}^{\epsilon},\widetilde{\mathbf{X}}^{\epsilon},\mathbf{Y}^{\epsilon},\widetilde{\mathbf{Y}}^{\epsilon})$ via \eqref{eq:dublu3}-\eqref{eq:dublu4} and $\mathbf{T}_1^{\epsilon}$, $\mathbf{T}_2^{\epsilon}$ and $\mathbf{T}_3^{\epsilon}$ as in \eqref{eq:afin5}-\eqref{eq:afin7}. Then,
	\begin{subequations}\vspace{-2mm}
		\begin{equation}\small
		\begin{array}{c}\mathbf K=(\Omega\widetilde{\mathbf {Y}}^{\epsilon}+\widetilde{\mathbf{Q}}\widetilde{\mathbf {N}}^{\epsilon})^{-1}(\Omega\widetilde{\mathbf {X}}^{\epsilon}+\widetilde{\mathbf{Q}}\widetilde{\mathbf {M}}^{\epsilon}),\end{array}\normalsize\label{eq:K_sparse}\vspace{-2mm}
		\end{equation}
		\begin{equation}\small
		\begin{array}{c}\mathcal{F}_\ell({\mathbf{T}}^{\epsilon},{\mathbf{K}})=\mathbf{T}^{\epsilon}_1+(\mathbf{T}^{\epsilon}_2\Omega^{-1})\widetilde{\mathbf{Q}}\mathbf{T}^{\epsilon}_3,\end{array}\normalsize\label{eq:CL_sparse}\vspace{-2mm}
		\end{equation}
	\end{subequations}
	having defined $\widetilde{\mathbf{Q}}:=\Omega\mathbf{Q}$. Note that $(\mathbf{N}^{\epsilon}\Omega^{-1},\widetilde{\mathbf{N}}^{\epsilon},\mathbf{M}^{\epsilon}\Omega^{-1},$ $\widetilde{\mathbf{M}}^{\epsilon},\mathbf{X}^{\epsilon},\Omega\widetilde{\mathbf{X}}^{\epsilon},\mathbf{Y}^{\epsilon},\Omega\widetilde{\mathbf{Y}}^{\epsilon})$ is also a DCF over $\mathcal{RH}_\infty$ of $\mathbf{T}^{\epsilon}_{22}=\mathbf{G}$, as all 8 TFMs are stable and they satisfy \eqref{eq:bez}, with the added benefit of $\Omega\widetilde{\mathbf{Y}}^{\epsilon},\widetilde{\mathbf{N}}^{\epsilon}\in\widehat{\mathcal{S}}_{(\Xi(1)+I_\ell)}$ and of $\Omega\widetilde{\mathbf{X}}^{\epsilon},\widetilde{\mathbf{M}}^{\epsilon}\in\widehat{\mathcal{S}}_{I_\ell}$.
	
	\begin{table*}[b]\vspace{-4mm}
		\centering
		\begin{tabular}{|c|c|c|c|c|}
			\hline
			Employed procedure&Guaranteed convergence&Runtime&Solution&$\left\|
			T_C-T_AT_B\right\|_*$  at convergence\\\hline
			Alg. 1 (Alg. 1 in \cite{BMI2LMI})&Yes&$18.41$ sec&$\widetilde{\mathbf{Q}}(s)=\mathcal{D}_\ell(5.9844)$& $\ 2.8\times10^{-12}$\\\hline Alg. 2 in \cite{BMI2LMI}&No&$20.49$ sec&$\widetilde{\mathbf{Q}}(s)=\mathcal{D}_\ell(5.9844)$&$2.1\times10^{-9}$\\\hline Alg. 1 in \cite{ADMM}&Yes&timed out after $900$ sec&$\widetilde{\mathbf{Q}}(s)=\mathcal{D}_\ell(6.0143)$ at timeout&$2.3\times10^{2}$ at timeout\\\hline
			Alg. 1 in \cite{IRM}&Yes&timed out after $900$ sec&$\widetilde{\mathbf{Q}}(s)=\mathcal{D}_\ell(5.7355)$ at timeout&$3.2\times10^{1}$ at timeout\\\hline
		\end{tabular}
		\caption{Comparison between algorithms which solve convex relaxations of \eqref{eq:prob_x}}
		\label{tab:num}
	\end{table*}\normalsize
	
	We will employ this new DCF over $\mathcal{RH}_\infty$ to form the controller as in \eqref{eq:K_sparse} and optimize the $\mathcal{H}_\infty$ norm of \eqref{eq:CL_sparse}. 
	The control laws in \eqref{eq:struc_des} can be obtained from a controller's NRF pair with $\mathbf{\Phi}=\mathcal{D}_\ell(\mathbf{\Phi}_\mathbf{K})\Xi(1)\in\widehat{\mathcal{S}}_{\Xi(1)}$ and $\mathbf{\Gamma}=\mathcal{D}_\ell(\mathbf{\Gamma}_\mathbf{K})\in\widehat{\mathcal{S}}_{I_\ell}$. By Proposition \ref{prop:NRF}, a solution to \eqref{eq:NRF_EMM} is $\widetilde{\mathbf{Q}}_0=0$ and note that a stable basis for the null-space from \eqref{eq:null_NRF} is expressed as in \eqref{eq:B_hat} with $q=\ell$ and $\mathbf{B}_i=\mathbf{N}_{\mathbf{B}_i}=\widetilde e_{1+(\ell+1)(i-1)},\ \forall i\in1:\ell$, where $\widetilde e_i$ is the $i^\text{th}$ vector of the canonical basis of $\mathbb{R}^{\ell^2\times 1}$.
	
	We now run Algorithm \ref{alg:iter} with MOSEK \cite{mosek}, called through MATLAB via YALMIP \cite{YALMIP}. 
	A comparison with other techniques from literature is given in Tab. \ref{tab:num}, located at the bottom of the next page, and their computational performance will be discussed in the next subsection. Taking $\widetilde{\mathbf{Q}}(s)=\mathcal{D}_\ell(5.9844)$ produces, $\forall i\in1:\ell$, the distributed control laws of type \eqref{eq:struc_des}\small
	\begin{equation*}
	\mathbf{u}_{(i\text{ mod }\ell)+1}=-2\mathbf{u}_{((i-1)\text{ mod }\ell)+1}+\frac{64.11s+257.4}{s+4}\mathbf{y}_{(i\text{ mod }\ell)+1}.
	\end{equation*}\normalsize
	\begin{remark}
		Let $\widetilde{\mathbf{K}}:=(I-\widetilde{\mathbf{\Phi}})^{-1}\widetilde{\mathbf{\Gamma}}$, where $\widetilde{\mathbf{\Phi}}:=-\Xi(2)$ and $\widetilde{\mathbf{\Gamma}}:=64I_\ell$, and notice that $\small\begin{bmatrix}
		\widetilde{\mathbf{\Phi}}&\widetilde{\mathbf{\Gamma}}
		\end{bmatrix}$ internally stabilizes $\begin{bmatrix}I&\overline{\mathbf{G}}\phantom{.}^\top\end{bmatrix}\hspace{-2mm}\phantom{.}^\top$. Then, the distributed implementation \eqref{eq:NRF_explicit} of the approximated distributed controller $\widetilde{\mathbf{K}}$ internally stabilizes $\overline{\mathbf{G}}$ even in the presence of communication disturbance (see \cite{NRF} and recall $\mathbf{b}_{(i\text{ mod }\ell)+1}$ from Fig. \ref{fig:connect}).
		Moreover, the control laws from \eqref{eq:NRF_explicit} implemented with either $(\mathbf{\Phi},\mathbf{\Gamma})$ or $(\widetilde{\mathbf{\Phi}},\widetilde{\mathbf{\Gamma}})$ stabilize all $\overline{\mathbf{G}}_\mathbf{\Delta}\in\mathcal{C}_{\overline{\mathbf{G}}}^\delta$ and $\delta=0.8968$, indicating satisfactory robustness.

	\end{remark}
	
	\subsection{Computational performance}\vspace{-1mm}
	
	We conclude this section by presenting a comparative discussion of the results showcased in Tab. \ref{tab:num}. With respect to our proposed procedure, inspired by \cite{BMI2LMI}, we may state that:
	
	\begin{enumerate}
		\item[1.] Algorithm 2 from \cite{BMI2LMI} is slightly more computationally demanding, due to optimizing over all decision variables during each iteration. However, this extra degree of freedom comes at the major cost of guaranteed convergence.
		
		\item[2.] Although the individual iterations of Algorithm 1 from \cite{ADMM} are significantly less costly and convergence is initially quite rapid, the latter tapers off on later iterations, similarly to Fig. 2 in \cite{ADMM}. Convergence can be sped up by the judicious choice of $\rho_1$ and $\rho_2$ form $(3.4)$ of \cite{ADMM}, yet our approach bypasses this empiric decision via the benefits of optimizing the trace heuristic (see \cite{rank}).
		
		\item[3.] Algorithm 1 from \cite{IRM} is based upon the same trace optimization heuristic proposed in \cite{rank} as our procedure, yet it requires an \emph{explicit} eigenvalue decomposition and orthonormal eigenvector computation at every iteration. For large-scale problems (such as our numerical example) this may prove unreliable, with the accumulation of computational errors noticeably hampering convergence.
	\end{enumerate} \vspace{-5mm}
	
	\section{Conclusion}\label{sec:outro}
	
	In this paper, we have shown that the distributed control of a network (having a possibly improper TFM) can be tackled by imposing constraints upon affine expressions of the Youla parameter. A procedure is given on how to relax this problem, which reduces to solving a structurally-constrained $\mathcal{H}_\infty$ norm contraction. The latter is approached through a convex and iterative optimization algorithm with guaranteed convergence.

	\vspace{-5mm}



	\section*{Appendix}

	\textbf{Proof of Proposition \ref{prop:NRF}}: Let there exist $\mathbf{Q}_0\in\mathcal{RH}_\infty^{m\times p}$ so that $\widetilde{\mathbf {X}}+\mathbf Q_0\widetilde{\mathbf {M}}\in{\widehat{\mathcal{S}}}_\mathcal{X}$ and $\widetilde{\mathbf {Y}}+\mathbf Q_0\widetilde{\mathbf {N}}\in{\widehat{\mathcal{S}}}_{\widehat{\mathcal{Y}}}$. They are equivalent to $F_{{\mathcal{X}}}\text{vec}(\widetilde{\mathbf {X}}+I\mathbf Q_0\widetilde{\mathbf {M}})\equiv0$ and $F_{\widehat{\mathcal{Y}}}\text{vec}(\widetilde{\mathbf {Y}}+I\mathbf Q_0\widetilde{\mathbf {N}})\equiv0$. Using the properties of the vectorization operator (see Lemma 1 in \cite{QI_orig}), we retrieve \eqref{eq:NRF_EMM}. Pick any $\widehat{\mathbf{Q}}\in\mathcal{RH}_\infty^{m\times p}$ which satisfies \eqref{eq:null_NRF} and note that, when replacing $\mathbf{Q}_0$ with $\mathbf{Q}:=\mathbf{Q}_0+\widehat{\mathbf{Q}}$ in \eqref{eq:NRF_EMM}, the identity with 0 from \eqref{eq:NRF_EMM} will hold. Also, ensuring \eqref{eq:mid_prop} is sufficient for the controller from \eqref{eq:K_frac} to be well-posed, while ensuring \eqref{eq:fin_prop} is sufficient for $\mathbf{\Gamma}$ and $\mathbf{\Phi}$ to be both well-posed and proper. Finally, the sparsity structures of $\mathbf{\Gamma}$ and $\mathbf{\Phi}$ follow from those of $\widetilde{\mathbf {X}}+\mathbf Q\widetilde{\mathbf {M}}$ and $\widetilde{\mathbf {Y}}+\mathbf Q\widetilde{\mathbf {N}}$, respectively, by the way they are defined in \eqref{eq:NRF_def_1}-\eqref{eq:NRF_def_2}. \qed
	
	\textbf{Proof of Lemma \ref{lem:NRCF}:} To prove point $\mathbf{(a)}$, define $\widehat{A}_r:=E_r^{-1}A_r$, $\widehat{B}_r:=E_r^{-1}B_r$, $\widehat{X}_r:=E_r^{\top}X_rE_r$ to rewrite \eqref{eq:GCARE} as\vspace{-2mm}
	\begin{equation}\label{eq:CTARE}
	\footnotesize\begin{array}{l}
	\widehat X_r\widehat A_r+\widehat A_r^{\top}\widehat X_r+C_r^{\top}C_r-(\widehat X_r\widehat B_r+C_r^{\top}D_r)\times\\
	\hspace{20mm}\times(D_r^{\top}D_r)^{-1}(\widehat B_r^{\top}\widehat X_r+D_r^{\top}C_r)=0,
	\end{array}\normalsize\vspace{-2mm}
	\end{equation}
	which is a standard continuous-time algebraic Riccati equation (see Chapter 13 in \cite{zhou}). Recall now that $\Lambda(A_r-sE_r)\subset\mathbb{C}^-$ and, thus, $\Lambda(\widehat A_r-sI)\subset\mathbb{C}^-$. Then, both $\small\begin{bmatrix}
	\widehat A_r-sI&\widehat B_r
	\end{bmatrix}$ and $\small\begin{bmatrix}
	\widehat A_r^{\top}-sI& C_r^{\top}
	\end{bmatrix}$ have full row rank $\forall s\in{\mathbb{C}}\backslash\mathbb{C}^-$ which, by section 3.2 of \cite{zhou}, means that $(\widehat{A}_r-sI,\widehat{B}_r)$ is stabilizable and $(C_r,\widehat{A}_r-sI)$ is detectable. Note that $\footnotesize\begin{bmatrix}
	\mathbf{N}^\top(s)&\mathbf{M}^\top(s)
	\end{bmatrix}^\top=C_r(sI-\widehat{A}_r)^{-1}\widehat{B}_r+D_r$ has full column rank $\forall s\in j{\mathbb{R}}\cup\{\infty\}$, or else there cannot exist $\widetilde{\mathbf{X}}, \widetilde{\mathbf{Y}}$ stable so that $\widetilde{\mathbf{Y}}\mathbf{M}-\widetilde{\mathbf{X}}\mathbf{N}=I$. Then, by point $(a)$ in Corollary 13.23 of \cite{zhou}, \eqref{eq:CTARE} has a stabilizing solution, $\widehat{X}_r$. Thus, \eqref{eq:GCARE} has a stabilizing solution, $X_r=(E_r^\top)^{-1}\widehat{X}_rE_r^{-1}$, and its stabilizing feedback equals that of \eqref{eq:CTARE}, $\widehat{F}_r:=-(D_r^{\top}D_r)^{-1}(\widehat B_r^{\top}\widehat X_r+D_r^{\top}C_r)=F_r$. \qed
	
	\textbf{Proof of Theorem \ref{thm:robstab}:} To prove point $\mathbf{(a)}$, define first
	$
	{\mathbf{T}}:=\footnotesize\left[\begin{array}{c:c}
	{\mathbf{T}}_{11}&{\mathbf{T}}_{12}\\\hdashline{\mathbf{T}}_{21}&{\mathbf{T}}_{22}
	\end{array}\right]=\left[\scriptsize\begin{array}{ll:l}
	0&-\widehat{\mathbf{M}}^{-1}&\widehat{\mathbf{M}}^{-1}\\\hdashline I&-\mathbf{G}&\mathbf{G}
	\end{array}\right],
	$
	where $\widehat{\mathbf{M}}^{-1}=\mathbf{G}_0\mathbf{M}^{-1}$ and $\mathbf{G}_0$ is expressed as in Lemma \ref{lem:NRCF}. Moreover, we have from \eqref{eq:dublu2} in Theorem \ref{thm:Youla} that
	$
	{\mathbf{M}}^{-1}=\footnotesize\left[\begin{array}{c|c}
	A-sE&B\\\hline-F&I
	\end{array}\right].
	$
	Expressing ${\mathbf{T}}=\footnotesize\left[\begin{array}{c:c}
	\mathbf{G}_0&0\\\hdashline 0&I
	\end{array}\right]\left[\begin{array}{ll:l}
	0&-{\mathbf{M}}^{-1}&{\mathbf{M}}^{-1}\\\hdashline I&-\mathbf{G}&\mathbf{G}
	\end{array}\right]$ and noticing that $\mathbf{T}^{\epsilon}=\footnotesize\begin{bmatrix}
	\epsilon I&0\\0&I
	\end{bmatrix}{\mathbf{T}}$, for an $\epsilon\in(0,1]$, we obtain the realization given in \eqref{eq:gen_rob}. Since $\mathbf{G}$ is given by a strongly stabilizable and detectable realization \eqref{eq:tfm}, then it is always possible to find $\widetilde F$ and $\widetilde H$ so that $A+B\widetilde  F-sE$ and $A+\widetilde HC-sE$ are admissible. Thus, defining $\widehat{F}:=\small\begin{bmatrix}
	0&\widetilde F
	\end{bmatrix}$ and $\widehat{H}:=\small\begin{bmatrix}
	0&\widetilde{H}^{\top}
	\end{bmatrix}\hspace{-2mm}\phantom{.}^{\top}$, we extract the realization of $\mathbf{T}_{22}^\epsilon$ from \eqref{eq:gen_rob},
	$
	\mathbf{T}_{22}^\epsilon=\footnotesize\left[\begin{array}{c|c}
	A_{22}-sE_{22}&B_{22}\\\hline C_{22}&D_{22}
	\end{array}\right]:=\tiny\left[\begin{array}{cc|cc:c}
	A_r-sE_r&\hspace{-2mm}-B_rF&B_r\\
	0&\hspace{-2mm}A-sE&B\\\hline
	0&\hspace{-2mm}C&D
	\end{array}\right],
	$
	to get that $A_{22}+B_{22}\widehat{F}-sE_{22}$ and $A_{22}+\widehat{H}C_{22}-sE_{22}$ are both admissible, since $\Lambda(A_r-sE_r)\subset\mathbb{C}^-$. Therefore, $\widehat{F}$ and $\widehat{H}$ can be used, as in Theorem \ref{thm:Youla}, in order to express the class of stabilizing controllers via a DCF over $\mathcal{RH}_\infty$ of $\mathbf{T}_{22}^\epsilon$.
	
	To prove point $\mathbf{(b)}$, begin by defining the system\vspace{-2mm}
	\begin{equation}\label{eq:T_bar}
		\hspace{-1mm}\overline{\mathbf{T}}:=\hspace{-1mm}\tiny\left[\begin{array}{cc|cc:c}
			A_r-sE_r&-B_rF&0&-B_r&B_r\\
			0&A-sE&0&-B&B\\\hline
			- H_rF_r&- H_rF&0&- H_r& H_r\\\hdashline
			0&0&0&\phantom{-}0&I\\
			0&C&I&-D&D
		\end{array}\right]\hspace{-1mm}=\hspace{-1mm}\small\left[\begin{array}{c:c}
			\overline{\mathbf{T}}_{11}&\overline{\mathbf{T}}_{12}\\\hdashline\overline{\mathbf{T}}_{21}&\overline{\mathbf{T}}_{22}
		\end{array}\right]\normalsize\hspace{-2mm}\vspace{-2mm}
	\end{equation}
	and by considering the class of TFMs expressed through
	$\mathcal{F}_u\Big(\overline{\mathbf{T}},\footnotesize\begin{bmatrix}
		\mathbf{\Delta}_{\widehat{\mathbf{N}}}\\\mathbf{\Delta}_{\widehat{\mathbf{M}}}
	\end{bmatrix}\normalsize\Big):=\overline{\mathbf{T}}_{22}+\overline{\mathbf{T}}_{21}\footnotesize\begin{bmatrix}
		\mathbf{\Delta}_{\widehat{\mathbf{N}}}\\\mathbf{\Delta}_{\widehat{\mathbf{M}}}
	\end{bmatrix}\normalsize\Big(I-\overline{\mathbf{T}}_{11}$ $\footnotesize\begin{bmatrix}
		\mathbf{\Delta}_{\widehat{\mathbf{N}}}\\\mathbf{\Delta}_{\widehat{\mathbf{M}}}
	\end{bmatrix}\normalsize\Big)^{-1}\overline{\mathbf{T}}_{12}$, with $\mathbf{\Delta}_{\widehat{\mathbf{N}}}$ and $\mathbf{\Delta}_{\widehat{\mathbf{M}}}$ as in \eqref{eq:class}. Denoting now the class of TFMs $\mathbf{G}_{\mathbf{\Delta}}:=\big( \widehat{\mathbf N}+\mathbf{\Delta}_{\widehat{\mathbf{N}}}\big)\big(\widehat{\mathbf M}+\mathbf{\Delta}_{\widehat{\mathbf{M}}}\big)^{-1}$, it is straightforward to check that $\mathcal{F}_u\Big(\overline{\mathbf{T}},\footnotesize\begin{bmatrix}
	\mathbf{\Delta}_{\widehat{\mathbf{N}}}\\\mathbf{\Delta}_{\widehat{\mathbf{M}}}
\end{bmatrix}\normalsize\Big)=\begin{bmatrix}
I&\mathbf{G}_{\mathbf{\Delta}}^\top
\end{bmatrix}\hspace{-1mm}\phantom{}^\top$. Thus, the proof of point $\mathbf{(b)}$ boils down to applying the Small Gain Theorem, as formulated in Chapter 8 of \cite{ess}, to confirm robust stability.

	Note that, since the realization of ${\mathbf{T}}^\epsilon_{22}$ from \eqref{eq:gen_rob} is strongly stabilizable and detectable, then so is the one belonging to $\overline{\mathbf{T}}_{22}^{\epsilon}$ in \eqref{eq:T_bar}. Now, if $(\mathbf{\Phi},\mathbf{\Gamma})$ is an NRF implementation of $\mathbf{K}$ as in \eqref{eq:NRF_def_1}-\eqref{eq:NRF_def_2} and $\mathbf{K}$ stabilizes $\mathbf{G}$, then $\begin{bmatrix}
		\mathbf{\Phi}&\mathbf{\Gamma}
	\end{bmatrix}$ stabilizes $\overline{\mathbf{T}}_{22}=\begin{bmatrix}
	I&\mathbf{G}^\top
\end{bmatrix}\hspace{-1mm}\phantom{}^\top$ (see \cite{NRF}). Since the latter's realization in \eqref{eq:T_bar} is strongly stabilizable and detectable, then $\begin{bmatrix}
\mathbf{\Phi}&\mathbf{\Gamma}
\end{bmatrix}$ stabilizes $\overline{\mathbf{T}}$ (as in Theorem \ref{thm:Youla}). Finally, it is straightforward to check that $\mathcal{F}_\ell({\mathbf{T}}^{\epsilon},{\mathbf{K}})=\epsilon\mathcal{F}_\ell\big(\overline{\mathbf{T}},\begin{bmatrix}
\mathbf{\Phi}&\mathbf{\Gamma}
\end{bmatrix}\big)$. If $
\left\|\mathcal{F}_\ell({\mathbf{T}}^{\epsilon},{\mathbf{K}})\right\|_\infty\hspace{-1mm}\leq1
$ then $\left\|\mathcal{F}_\ell\big(\overline{\mathbf{T}},\begin{bmatrix}
\mathbf{\Phi}&\mathbf{\Gamma}
\end{bmatrix}\big)\right\|_\infty\leq\frac{1}{\epsilon}$ and, by applying point $(b)$ of Theorem 8.1 in \cite{ess}, it follows that the closed-loop interconnection between $\begin{bmatrix}
\mathbf{\Phi}&\mathbf{\Gamma}
\end{bmatrix}$ and $\mathcal{F}_u\Big(\overline{\mathbf{T}},\footnotesize\begin{bmatrix}
\mathbf{\Delta}_{\widehat{\mathbf{N}}}\\\mathbf{\Delta}_{\widehat{\mathbf{M}}}
\end{bmatrix}\normalsize\Big)=\begin{bmatrix}
I&\mathbf{G}_{\mathbf{\Delta}}^\top
\end{bmatrix}\hspace{-1mm}\phantom{}^\top$ will be internally stable and well-posed for any $\mathbf{G}_{\mathbf{\Delta}}\in\mathcal{C}_{\mathbf{G}}^\epsilon$.

\noindent
As shown in the proof of the main result from \cite{NRF}, this ensures that the control laws from \eqref{eq:NRF_explicit} will stabilize any $\mathbf{G}_{\mathbf{\Delta}}\in\mathcal{C}_{\mathbf{G}}^\epsilon$. \qed

	\textbf{Proof of Proposition \ref{prop:size_reduce}:} Define first $\widehat{F}:=\begin{bmatrix}
	0&F
	\end{bmatrix}$ and $\widehat{H}:=\begin{bmatrix}
	0&H^{\top}
	\end{bmatrix}\hspace{-1mm}\phantom{}^{\top}$ and employ these two feedbacks to write, via \eqref{eq:dublu1}-\eqref{eq:dublu2}, a DCF of $\mathbf{T}_{22}^\epsilon$ from \eqref{eq:gen_rob}. This factorization is indeed a DCF over $\mathcal{RH}_\infty$ due to the fact that $A+BF-sE$ and $A+HC-sE$ are admissible and $\Lambda(A_r-sE_r)\subset\mathbb{C}^-$. The identities from \eqref{eq:dublu3}-\eqref{eq:dublu4} and \eqref{eq:afin5}-\eqref{eq:afin7} follow by writing the realizations given by \eqref{eq:dublu1}-\eqref{eq:dublu2} and by \eqref{eq:afin2}-\eqref{eq:afin4} in Theorem \ref{thm:Youla}, and then eliminating all unobservable modes.\qed
	
	\textbf{Proof of Theorem \ref{thm:algo}}: To prove point $\mathbf{(a)}$, we first ensure that the realization from \eqref{eq:Q_hat_ss} is stable by imposing that $A_x^\top$ and $\left(A_i^\mathbf{B}+B_i^\mathbf{B}F_i^\mathbf{B}\right)\hspace{-2mm}\phantom{.}^\top,\ \forall\ i\in 1:q$, have eigenvalues only in $\mathbb{C}^-$, along with $A_{\overline{\mathbf{B}}}^\top$ via \eqref{eq:B_fac}-\eqref{eq:B_bar}. These conditions are equivalent to $\exists\  {P}_x={P}_x^\top\succ 0$ and $P_i^\mathbf{B}=\left(P_i^\mathbf{B}\right)\hspace{-2mm}\phantom{.}^\top\succ 0$, $i\in 1:q$, such that $-2\text{sym}(A_xP_x)\hspace{-0.5mm}\succ\hspace{-0.5mm} 0$ and $-2\text{sym}\left(A_i^\mathbf{B}P_i^\mathbf{B}+B_i^\mathbf{B}F_i^\mathbf{B}P_i^\mathbf{B}\right)\hspace{-0.5mm}\succ\hspace{-0.5mm} 0,\ \forall i\in1: q$. To remedy the bilinearity induced by $P_xA_x^\top$ and $P_i^\mathbf{B}\left(F_i^\mathbf{B}\right)\hspace{-2mm}\phantom{.}^\top$, define $\overline{P}_x:=P_xA_x^\top$ and $\overline{P}_i^\mathbf{B}:=P_i^\mathbf{B}(F_i^\mathbf{B})\hspace{-1mm}\phantom{.}^\top$, with $i\in 1:q$, and rewrite the inequalities as $-2\text{sym}(\overline{P}_x)\succ 0$ and $-2\text{sym}\big(P_i^\mathbf{B}\left(A_i^\mathbf{B}\right)\hspace{-2mm}\phantom{.}^\top+\overline{P}_i^\mathbf{B}\left(B_i^\mathbf{B}\right)\hspace{-2mm}\phantom{.}^\top\big)\succ 0,\ \forall i\in1: q$.

	If these new affine inequalities are satisfied, then due to $\Lambda\left(A^f-sI\right)\subset\mathbb{C}^-$ and to $C_2^f\big(sI-A^f\big)^{-1}B_2^f\equiv 0$, it follows that $\overline{A}$ from \eqref{eq:H_CL_ss} has $\Lambda\left(\overline{A}-sI\right)\subset\mathbb{C}^-$. By the equivalence of points $(i)$ and $(vii)$ from Corollary 12.3 in \cite{ess}, we have that $\left\|{\mathbf{T}}_1^\epsilon+{\mathbf{T}}_2^\epsilon\mathbf{Q}{\mathbf{T}}_3^\epsilon\right\|_\infty<1$ if and only if $\exists\ P=P^{\top}\succ 0$ such that
	$\tiny
	-\begin{bmatrix}
	2\text{sym}(P\overline{A})&P\overline{B}&\phantom{-}\overline{C}^{\top}\\
	\overline{B}^{\top}P&- I&\phantom{-}\overline{D}^{\top}\\
	\overline{C}&\overline{D}&- I
	\end{bmatrix}\succ 0
	$, which contains bilinear products of $P$ with $\overline{A}$ and $\overline{B}$, thus leading to nonconvex optimization. To obtain an affine expression, define $A_S$ as in \eqref{eq:aux3}, in order to introduce $\overline{P}:=PA_S$. With this new matrix and the four matrices defined in \eqref{eq:aux1}, notice that $P\overline{A}= P\overline{A}^f+\overline{P}\overline{C}_2^f$ and that $P\overline{B}= P\overline{B}_1^f+\overline{P}\overline{D}_{21}^f$. The norm condition is equivalent to $-G\succ 0$, with $G$ from \eqref{eq:aux2} being affine in all variables.
	
	Recall the inequalities from \eqref{eq:prob_x}, that contain bilinear terms, to denote $\overline{P}^D:=\overline{d}_xd_x^{\top}$, while imposing that $\overline{d}_x-d_x=0$. The latter will also induce the additional constraint $\overline{P}^D = \big(\overline{P}^D\big)^\top$, from which we obtain the $N_Z$ LMIs given in \eqref{eq:prob_iter}. Form now the matrices from \eqref{eq:aux3}-\eqref{eq:aux6} to note that \eqref{eq:prob_x} is equivalent to\vspace{-1mm}
	\begin{equation}\label{eq:prob_BE}\footnotesize
	\left\{\hspace{-2mm}
	\begin{array}{l}\vspace{-1mm}
	(\widehat{Z}_1^k)^\top\widehat{Z}_1^k+\hspace{-1mm}\sum\limits_{i\in1:q}\hspace{-1mm}d_{xi}\big((\widehat{Z}_1^k)^\top\widehat{Z}_{2i}^k+(\widehat{Z}_{2i}^k)^\top\widehat{Z}_1^k\big)+\hspace{-1mm}\sum\limits_{i\in1:q}\hspace{-1mm}\overline{P}^D_{ii}(\widehat{Z}_{2i}^k)^\top\widehat{Z}_{2i}^k+\\
	+\sum_{i=1}^{q-1}\sum_{j=i+1}^{q}\overline{P}^D_{ij}\big((\widehat{Z}_{2i}^k)^\top\widehat{Z}_{2j}^k+(\widehat{Z}_{2j}^k)^\top\widehat{Z}_{2i}^k\big)\succ 0,\forall k\in1:N_Z,\\
	\overline{P}^D = \big(\overline{P}^D\big)^\top,\ P=P^{\top}\succ 0,\ -G\succ 0,\ P_x=P_x^\top\succ 0,\\
	-2\text{sym}(\overline{P}_x)\succ 0,\ P_i^\mathbf{B}=(P_i^\mathbf{B})^{\top}\succ 0,\ \forall i\in 1:q,\\
	-2\text{sym}\left(P_i^\mathbf{B}\left(A_i^\mathbf{B}\right)^\top+\overline{P}_i^\mathbf{B}\left(B_i^\mathbf{B}\right)^\top\right)\succ 0,\ \forall i\in 1:q,\ T_AT_B=T_C.\vspace{-1mm}
	\end{array}
	\right.\normalsize\vspace{-1mm}
	\end{equation}
	
	By selecting an artificial scalar $\gamma>0$ as the cost function and by applying Theorem 1 in \cite{BMI2LMI}, we get that \eqref{eq:prob_BE} is equivalent to the problem in which $T_AI_{n_T}T_B=T_C$ is replaced by $\text{rank}\footnotesize\begin{bmatrix}
		T_C+XY-T_AY-XT_B&T_A-X\\T_B-Y&I_{n_T}
	\end{bmatrix}\normalsize\hspace{-1mm}=\text{rank}\ I_{n_T}$, for any matrices $X$ and $Y$. Therefore, by applying Theorem 2 in \cite{BMI2LMI} with a regularization parameter $\lambda>0$, adapting Algorithm 1 in \cite{BMI2LMI} for the resulting problem, scaling its cost function by $1/\lambda$ and then taking $\lambda\rightarrow\infty$, we obtain Algorithm \ref{alg:iter}, which solves \eqref{eq:prob_iter} at each iteration. If the initialization is successful (the LMI system along with $\overline{d}_x-d_x=0$ must be feasible for the original BMI system to be feasible), we then set $\Theta^1=T_B-T_B^0$ and we employ Theorem 3 in \cite{BMI2LMI} for our algorithm (with the adapted cost function), which guarantees its convergence.
	\qed


	

	
	
	
	\bibliographystyle{IEEEtran}
	\bibliography{manuscript.bib}

\end{document}